\newtheorem*{rep@theorem}{\rep@title}
\newcommand{\newreptheorem}[2]{%
\newenvironment{rep#1}[1]{%
 \def\rep@title{#2 \ref{##1}}%
 \begin{rep@theorem}}%
 {\end{rep@theorem}}}
\newtheorem{definition}{Definition}[section]
\newtheorem{example}{Example}
\newtheorem{lemma}{Lemma}
\newcommand{\RNum}[1]{\uppercase\expandafter{\romannumeral #1\relax}}
\newcommand{\ours}{disc decomposition\xspace}
\newcommand\norm[1]{\left\lVert#1\right\rVert} 
\DeclareMathOperator*{\argmin}{\arg\!\min}
\DeclareMathOperator*{\Int}{Int}
\DeclareMathOperator*{\hull}{hull}
\newcommand{\cut}[1]{}
\newcommand{\removelatexerror}{\let\@latex@error\@gobble}
\newcommand{\vect}[1]{\mathbf{#1}}
\newcommand{\mat}[1]{\mathbf{#1}}
\def\eqref#1{Eq.~\ref{#1}}
\def\floor#1{\lfloor #1 \rfloor}
\def\1{\bm{1}}
\def\rvu{{\mathbf{i}}}
\def\rvu{{\mathbf{u}}}
\def\rmA{{\mathbf{A}}}
\def\rmP{{\mathbf{P}}}
\DeclareMathAlphabet{\mathsfit}{\encodingdefault}{\sfdefault}{m}{sl}
\SetMathAlphabet{\mathsfit}{bold}{\encodingdefault}{\sfdefault}{bx}{n}
\def\sN{{\mathbb{N}}}
\def\sP{{\mathbb{P}}}
\def\sR{{\mathbb{R}}}
\DeclareMathOperator{\logit}{logit}
\renewcommand*{\backrefalt}[4]{%
    \ifcase #1 \footnotesize{(Not cited.)}%
    \or        \footnotesize{(Cited on page~#2)}%
    \else      \footnotesize{(Cited on pages~#2)}%
    \fi}
\newcommand{\rebutal}[1]{\textcolor{black}{#1}}
\definecolor{linkcolor}{RGB}{83,83,182}
\definecolor{citecolor}{RGB}{128,0,128}
\newcommand{\normin}[1]{ \lVert {#1} \rVert}
\newcommand{\payoffprob}[2]{ \mathrm{\mathrm{P}}_{#1#2}}
\newcommand{\payoffprobhat}[2]{ \mathrm{\mathrm{\hat P}}_{#1#2}}
\newcommand{\payoffprobmat}{ \mathrm{\mathbf{P}} }
\newcommand{\payoffanti}[2]{ \mathrm{\mathrm{A}}_{#1#2}}
\newcommand{\player}[1]{ \mathrm{u}_{#1}}
\newcommand{\playertilde}[1]{ \tilde{\mathrm{u}}_{#1}}
\newcommand{\playertildev}[1]{ \tilde{\mathrm{v}}_{#1}}
\newcommand{\playerv}[1]{ \mathrm{v}_{#1}}
\def\endproblem{\eqno \hbox{\@eqnnum}$$\@ignoretrue}
\crefname{model}{Model}{Models}
\Crefname{problem}{Problem}{Problems}
\crefname{problem}{Pb.}{Pbs.}
\crefname{algorithm}{Algorithm}{Algorithms}
\crefname{figure}{Figure}{Figures}
\crefname{proposition}{Proposition}{Propositions}
\crefname{appendix}{Appendix}{Appendix}
\crefname{assumption}{Assumption}{Assumptions}
\begin{document}
%

%

\twocolumn[

\aistatstitle{On the Limitations of the Elo, Real-World Games are Transitive, not Additive}

\aistatsauthor{ Quentin Bertrand \And Wojciech Marian Czarnecki \And  Gauthier Gidel }

\aistatsaddress{ Mila and Université de Montréal \And  VoyLab \And Mila and Université de Montréal\\ Canada CIFAR AI Chair } ]

\begin{abstract}
The Elo score has been extensively used to rank players by their skill or strength in competitive games such as chess, go, or StarCraft II. The Elo score implicitly assumes games have a strong additive---hence transitive---component.
In this paper, we investigate the challenge of identifying transitive components in games. As a starting point, we show that the Elo score provably fails to extract the transitive component of some elementary transitive games.
Based on this observation, we propose an alternative ranking system that properly extracts the transitive components in these games. Finally, we conduct an in-depth empirical validation on real-world game payoff matrices: it shows significant prediction performance improvements compared to the Elo score.

\end{abstract}
\section{INTRODUCTION}
Accurately evaluating and ranking players in games has been a concern for decades. For instance, in the context of chess,~\citet{elo1961uscf} stated that ``The rating system shall provide: as close an estimate as possible of the current playing strength of an individual as computed from his performance in competition with other players''.
The goal of the Elo score was to rank the players by their ``skill'' or ``strength''.
In this model, the probability of a confrontation outcome between two players is an increasing function of the difference in their Elo scores.

\looseness=-1
More recently, a surge of interest in evaluation and ranking for games has been triggered by the successes of multi-agent reinforcement learning agents in increasingly complex environments~\citep{vinyals2019grandmaster,Berner2019,liu2021motor}.
Thus, having accurate and scalable ways to evaluate agents becomes crucial to monitoring agents' learning, selecting the best agents, and performing evolutionary algorithms \citep{Jaderberg2019}.
This interest yields alternatives and extensions of the Elo score~\citep{herbrich2006trueskill,minka2018trueskill} which take into account the sequential aspect of the games~\citep{Cardoso2019}, to quantify potential uncertainty on the Elo score~\citep{Glickman2001,herbrich2006trueskill,Sismanis2010}.
In addition, several regularizations can be added to the Elo score to improve its generalization~\citep{Sismanis2010} and sample complexity~\citep{yan2022learning}.

Self-play learning algorithms can now easily beat humans at
games such as chess, shogi, or go \citep{Silver2018,Bansal2018,Jaderberg2019},
which are considered mostly transitive (if $a$ beats $b$, and $b$ beats $c$, then $a$ beats $c$).
However, designing learning algorithms for highly non-transitive games such as StarCraft~II\footnote{StarCraft II is considered to have cyclic components: $a$ beats $b$, who beats $c$, who beats $a$.} is much more challenging~\citep{Lanctot2017,Balduzzi2018,Balduzzi2019}.
For instance,~\citet{vinyals2019grandmaster} explains that self-play ``is more
forgetful'' when trained to play StarCraft II.
To ``avoid cycles'',~\citet{vinyals2019grandmaster} considered a form of play against a population of agents ``by playing against all previous players in the league''.

Even though cyclic behaviors are observed in practice: in the games Age of Empires II \citep{Horrigan2022}, StarCraft II, \citep{Vinyals2017} or Dota 2 \citep{Berner2019}.
It seems that for many players, the transitive property is still valid (see for instance \Cref{fig:real_chess_data} in \Cref{sec:experiments}).
That is why the Elo score has been used to rank participants for multiple games and sports, from chess to
basketball \citep{NBAElo}
or
football \citep{FIFAElo}.

These notions of cyclicity and transitivity remain relatively qualitative and usually require expert domain knowledge to be assessed.
To understand why some games are more cyclic than others, we propose quantifying the amount of cyclicity and transitivity in real-world games.
More precisely, given $n$ players
and their empirical payoff matrix $(\payoffprob{i}{j})_{1\leq i,j \leq n}$,\footnote{We would like to emphasize that our goal differs from standard empirical game theory~\citep{walsh2003choosing,wellman2006methods} which aims at approximating the Nash of the game.}
we propose new scores to answer the question:
\begin{center}
    \begin{minipage}{.95\linewidth}
        \centering
        \emph{can one quantify the amount of transitivity
        of a \rebutal{zero-sum two-player game} from empirical data?}
    \end{minipage}
\end{center}
%
%
The contributions of this work are the following:
\begin{itemize}[
    noitemsep,itemjoin = \quad,topsep=0pt,parsep=0pt,partopsep=0pt, leftmargin=*]
    \item We propose a \ours (\Cref{thm:at_most_one_transitive}) which yields a quantitative and tractable definition of the amount of transitivity of a real-world game.
    \item We compute this quantity for several real-world games, including chess and StarCraft II, which yields better predictions for new matchup outcomes.
\end{itemize}
The paper is organized as follows.
\Cref{sec:background} provides some background on zero-sum games and the Elo score.
The proposed \ours, based on the normal decomposition of skew-symmetric matrices, is developed in \Cref{sec:proposed_approach}.
Experiments on payoff matrices coming from real-world games played by bots and humans are provided in \Cref{sec:experiments}.
All proofs can be found in the appendix.
\paragraph{Notation}
Capital bold letters denote matrices, and lowercase bold letters denote vectors.
The set of integers from $1$ to $n$ is denoted $[n]$.
The sigmoid function is written as $\sigma$, and its inverse is written as $\logit$.
The binary cross entropy
$(x, \hat x) \mapsto
- x \log (\hat x)
- (1 - x) \log (1 - \hat x)$ is denoted $\mathrm{bce}$.
\paragraph{Related Work}
\looseness=-1
We consider~\citet{Balduzzi2018,Balduzzi2019} to be our closest related works.
Relying on combinatorial Hodge theory \citep{Jiang2011},
\citet{Balduzzi2018} proposed an extension of the Elo score, called $m$-Elo, which can express potential cyclic components.
Given a payoff matrix $\mat{P}$ (\Cref{def:empirical_game}) they impose a transitive component
$\rvu^{\mathrm{Trans}}
\triangleq
(\sum_j \logit(\mat{P})_{i j})_{i \in [n]}$,
then they compute the normal decomposition (\Cref{thm:antisym_pca}) of the matrix $\logit(\mat{P}) - (\rvu^{\mathrm{Trans}} \mathbf{1}^\top - \mathbf{1} \rvu^{\mathrm{Trans} \top})$.
Unlike the $m$-Elo, we do not impose the transitive component to correspond to an Elo score.
In this work, we instead propose to directly compute the normal decomposition of $\logit(\mat{P})$ and provide a principled result (\Cref{thm:at_most_one_transitive}) to interpret its main component in terms of transitivity of the induced disc game.
Moreover, we show how to handle missing and infinite (when the probability of winning is $0$ or $1$ the logit is infinite) entries (\Cref{sub:optimization_details}).
\citet{Balduzzi2019} proposed to compute the normal decomposition of $2 \mat{P} - 1$ and visualize it as $2$-dim embeddings but did not provide theorems or insights to interpret it.
For ourselves we leverage the symmetric zero-sum game structure, on the other hand, previous works from the pairwise comparison community \citep{Shah2017} relied on threshold singular value decomposition \citep{Chatterjee2015} to provide statistical guarantees on the payoff matrix estimation.

\looseness=-1
While we focus on player strength evaluations for better matchup predictions, a related line of work consists of "only" ranking players without matchup predictions.
\citet{Czarnecki2020} proposed to compute the Nash equilibria of empirical games to cluster players into level sets of ``strength''.
However, when the game has more than two players or is not zero-sum, the computation of the Nash equilibrium is in a class of problem complexity called PPAD-complete~\citep{chen2009settling,daskalakis2009complexity} which is considered intractable.
Motivated by this intractability result~\citet{omidshafiei2019alpha,rashid2021estimating} proposed a tractable ranking technique ($\alpha$-rank) theoretically grounded in the dynamical system theory.


\section{BACKGROUND}
\label{sec:background}
\subsection{Symmetric Zero-Sum Games}
\label{sub:sym_games}
%
In this section, we first define symmetric zero-sum games (\Cref{def:empirical_game}), fully transitive and cyclic games (\Cref{def:cycli_trans}).
Then we provide two examples of games, the Elo \emph{game} (\Cref{ex:elo_game}) and the disc game (\Cref{ex:disc_game}).
%
%
\begin{definition}
\label{def:empirical_game}
    We define a symmetric zero-sum game through
        a probability matrix
        $\payoffprobmat \in \sR^{n \times n}$,
        such that for all $i, j \in [n]$,
        $\payoffprob{i}{j} \in [0,1]$
        and
        $\payoffprob{i}{j}
        = 1
        - \payoffprob{j}{i}$.
        We call a player an index $i$ of this matrix.
\end{definition}
Using \Cref{def:empirical_game}
\begin{enumerate*}[series = tobecont, itemjoin = \quad, label=(\roman*)]
    \item $\payoffprob{i}{j} >0.5$ can be seen as $i \text{ beats } j$
    \item $\payoffprob{i}{j} <0.5$ corresponds to $i \text{ is beaten by } j$
    \item $\payoffprob{i}{j} =0.5$ is a tie.
\end{enumerate*}
For many real-world games, one can have access to large databases of human game outcomes: the symmetric zero-sum game payoff matrix can be estimated empirically.
Along with this paper, we will use empirical payoff and symmetric zero-sum games interchangeably.
We argue that real-world zero-sum games lie on a spectrum between being fully transitive (without any cycle) and being fully cyclic ( all the players belong to the same cycle).
\begin{definition}\emph{(Fully Cyclic and Transitive).}
    \label{def:cycli_trans}
    The game $\payoffprobmat$
    is said to be \emph{fully transitive} if
    for all players $i, j, k$, if
    $\payoffprob{i}{j} > 0.5$
    and
    $\payoffprob{j}{k} > 0.5$
    then
    $\payoffprob{i}{k} > 0.5$.
    The game $\payoffprobmat$ is said to be \emph{fully cyclic} if there exists an ordering $\gamma$ such that
    $
    \payoffprob{\gamma(1)}{\gamma(2)} > 0.5,
    \dots,
    \payoffprob{\gamma(n-1)}{\gamma(n)} > 0.5
    $
    and
    $\payoffprob{\gamma(n)}{\gamma(1)} > 0.5$.
\end{definition}
\begin{figure}[tb]
    \centering
    \includegraphics[width=1\linewidth]{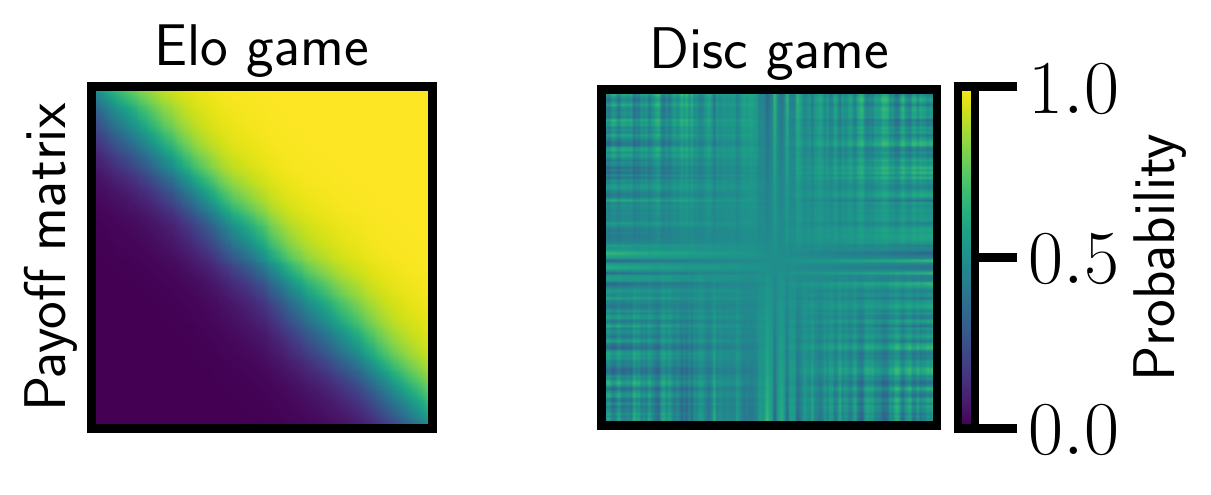}
    \caption{
        \textbf{Elo and Cyclic Disc Games.}
        Payoff matrices for an Elo (left) and a disc (right) game.
        For the Elo game (left) one can see that player $1$ beats everyone: $\payoffprob{1}{j} > 1/2$ for all $j \in [n] \backslash \{ 1 \}$.
        Player $2$ beats everyone except player $1$, etc.
        Conversely, there is no total order for the disc game (right).
        }
        \label{fig:intro_elo_disc}
\end{figure}
The prototypical examples of fully transitive games are Elo games
(\Cref{ex:elo_game}, \Cref{fig:intro_elo_disc} left): each player is assigned one score (the Elo),
and the probability of the outcome between two players is an increasing function (a sigmoid) of the difference in the Elo scores.
\begin{example}\emph{(Elo Game~\citealt{Balduzzi2019}).}
    Let $\vect{u} \in \sR^n\,,$
    $\payoffprob{i}{j} = \sigma(\player{i} - \player{j})$, for all $i, j \in [n]$.
    \label{ex:elo_game}
\end{example}
One can easily show that Elo games (\Cref{ex:elo_game}) are fully transitive.
However, the ''reciprocate'' is false in general: in \Cref{sub:elo} we provide an example (\Cref{ex:trans_non_elo}) of a fully transitive game for which the Elo score fails to correctly rank players.
On the other side of the spectrum, the typical example of a fully cyclic game is the cyclic disc game (\Cref{ex:disc_game}, \Cref{fig:intro_elo_disc} right), where each player $i$ is assigned two scores
$(\player{i}, \playerv{i})$.
\begin{example}\emph{(Disc Game,~\citealt{Balduzzi2019}).}
     Let $\vect{u}, \vect{v} \in \mathbb{R}^n$,
    $\payoffprob{i}{j} = \sigma(\player{i} \playerv{j} - \playerv{i} \player{j})\,,\,$ for all  $i, j  \in [n]$.
    \label{ex:disc_game}
\end{example}

For example, if for all $i \in [n]$, $\player{i}=1$, the disc game is an Elo game (\Cref{ex:elo_game}) and is transitive.
One the other hand, if $\vect{u} = (\cos \tfrac{2\pi i}n)_{i \in [n]}, \vect{v}= (\sin\tfrac{2\pi i}n)_{i \in [n]}$, then the disc game is fully cyclic (\Cref{fig:intro_elo_disc}, right).
The main contribution of \Cref{sub:cstr_disc_game} is to show that a disc game can be nothing else but fully transitive or fully cyclic (\Cref{prop:disc_game}).
%
\subsection{The Elo Score and its Limitations}
\label{sub:elo}
In this section, we recall the definition of the stationary Elo score (\Cref{eq:elo_stationary}).
Then we recall one usual issue with the Elo score.
Finally, we recall that the Elo score can fail on some transitive games (\Cref{ex:trans_non_elo,fig:elo_break}).
\paragraph{Recalls on the Elo score}
For zero-sum symmetric empirical games, \citet{elo1978rating} proposed a rating system able to predict the probability of the outcome of a game between two agents. Given two agents $i$ and $j$,
with a respective Elo score of $\player{i}$ and $\player{j}$,
the probability of $i$ beating $j$ under the Elo model is
    $\sP(i \text{ beats } j)
    =
    \sigma(\alpha (\player{i} - \player{j}))$,
where $\alpha > 0$ is a scaling factor which brings the values of $\vect{u}$ in a range which is easier to grasp for humans
(for simplicity  $\alpha$ is set to $1$).
In a stationary regime, for an empirical payoff matrix $\mat{P}$, the\footnote{The Elo score is not unique and is defined up to a constant.} Elo score $\player{i}^{\mathrm{Elo}}$ of each player $i$ is defined as following \citep[Prop. 1]{Balduzzi2018}, with $\mathrm{bce}$ being the binary cross entropy,
%
\begin{align}
    \label{eq:elo_stationary}
    \vect{u}^{\mathrm{Elo}}
    =
    \argmin_{\vect{u}} \sum_{i, j}
    \mathrm{bce}
    (\payoffprob{i}{j}, \sigma(\player{i} - \player{j}))
    \enspace.
\end{align}
\paragraph{Issues with the Elo score}
\looseness=-1
A first issue with the Elo score is that it assumes that the modeled game is additive (in the $\logit$ space): $\logit(\payoffprob{i}{j}) + \logit(\payoffprob{j}{k}) = \logit(\payoffprob{i}{k})$.
Hence the modeled game should be transitive, which is not always the case for real-world games (such as rock-paper-scissor or StarCraft II, for instance).
Another issue with the Elo score is the following:
even if a game is transitive, the Elo score can "fail" at ranking the players correctly.
Indeed, there are some situations where the Elo score, a single scalar variable, is not expressive enough to predict the outcome of future confrontations.

\begin{example}\label{ex:trans_non_elo}
    Here we define a family of three-player transitive games for all $\gamma, \delta \in (0.5, 1]$.
    Contrary to Elo games (\Cref{ex:elo_game}), outcome probabilities might be non-additive:
    \begin{align}
        \label{eq:trans_non_elo}
        \mat{P}^{(\gamma, \delta)} = \begin{pmatrix}
            0.5 & \gamma & \gamma\\
            1 - \gamma & 0.5 & \delta \\
            1 - \gamma & 1 - \delta & 0.5
        \end{pmatrix}
        \enspace.
        \nonumber
    \end{align}
\end{example}
\Cref{ex:trans_non_elo} describes the payoff matrix of a game that is transitive for $\gamma, \delta \in (0.5, 1]$, however when $\gamma$ is close to $0.5$ and $\delta$ is close to $1$---i.e., the second player slightly loses against the first one and significantly wins against the third one---Elo score fails to assign scores which yield correct matchup predictions between players.
\Cref{fig:elo_break} displays the set of values $\gamma, \delta$ for which the Elo score fails (in red) and succeeds (in green) to correctly estimate the probability of winning between the first and the second players of the game $\mat{P}^{(\gamma, \delta)}$ (\Cref{ex:trans_non_elo}).
Despite the game being transitive (player $1$ beats player $2$ and $3$ and player $2$ beats player $3$), there exists a significant range of values $\gamma, \delta$, for which the Elo score assigns a larger score to player $2$ than player $1$, and thus wrongly predicts the outcome of the confrontation.
%
\begin{figure}[tb]
        {    \centering
    \includegraphics[width=0.95\linewidth]{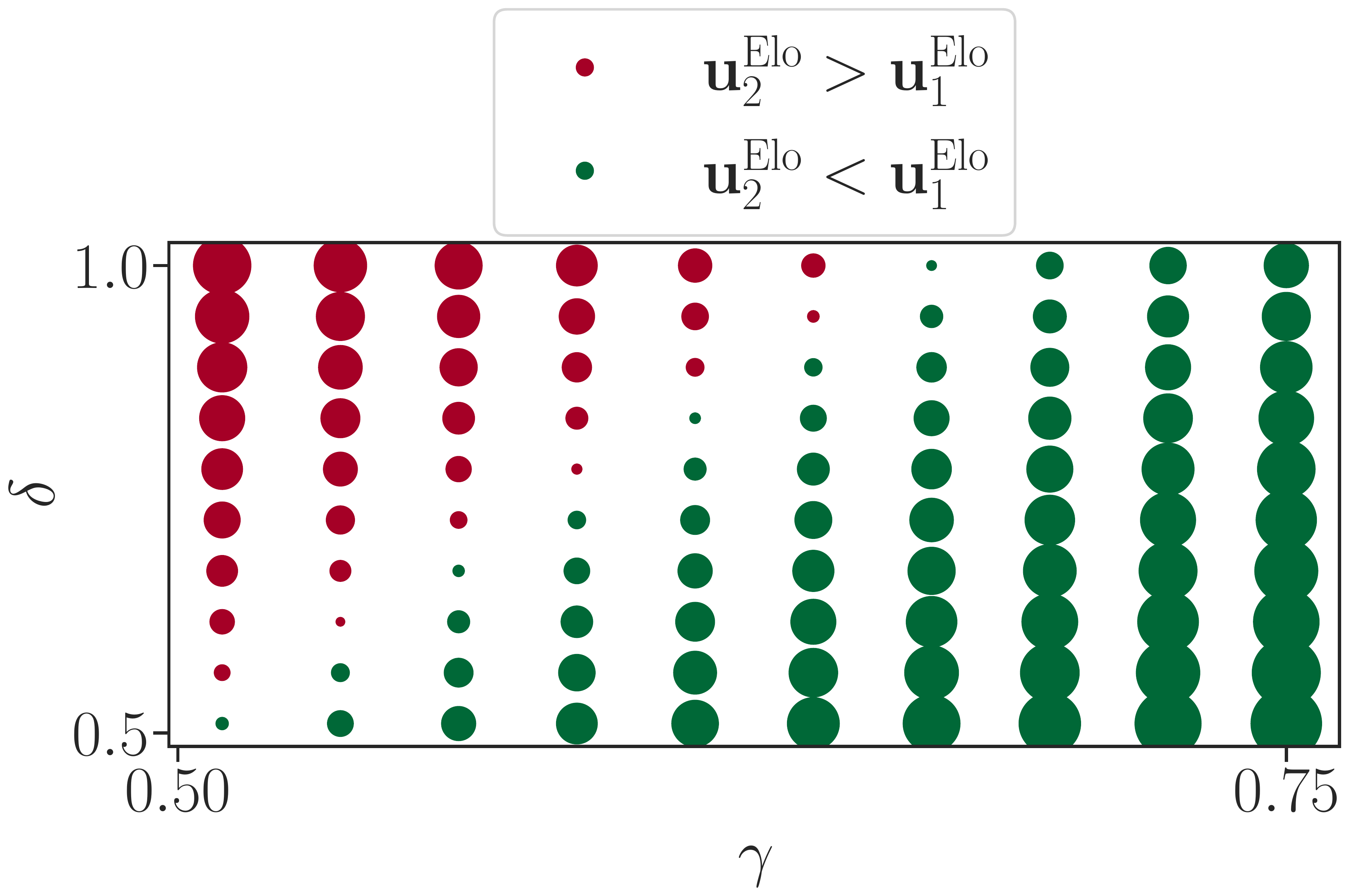}
    }
    \caption{\textbf{Elo Score Fails to Rank Players for Some Transitive Games.}
    This figure displays the difference $\player{1}^{\mathrm{Elo}} - \player{2}^{\mathrm{Elo}}$ between the Elo scores of $\player{1}^{\mathrm{Elo}}$ of player $1$, and $\player{2}^{\mathrm{Elo}}$ of player $2$ (computed using the stationary Elo score in \Cref{eq:elo_stationary}).
    The difference of the Elo scores is displayed for multiple probability matrices $\mat{P}^{(\gamma, \delta)}$ of the transitive game (\Cref{ex:trans_non_elo}).
    Red dots indicate that $\player{2}^{\mathrm{Elo}} > \player{1}^{\mathrm{Elo}}$ and green dots indicate $\player{1}^{\mathrm{Elo}} > \player{2}^{\mathrm{Elo}}$,
    the size of the dots is proportional to $|\player{1}^{\mathrm{Elo}} - \player{2}^{\mathrm{Elo}}|$.
    }
    \label{fig:elo_break}
\end{figure}

%
\paragraph{Connection with Stochastically Transitive Models}
 Bradley-Terry-Luce and Thurstone models \citep{tutz1986bradley,atkinson1998asian,Negahban2012,Shah2017} are a generalization of the Elo model.
There are used to stochastically approximate transitive models of a specific type of what we called symmetric zero-sum games with a strong transitive component.
Similar limitations with the Elo score that the ones we present in \Cref{ex:trans_non_elo} were previously mentioned by \citet[Figure 1a]{Shah2017}.
Moreover, \citet{Shah2017} showed that there exist some transitive matrices (called SST matrices) which are poorly approximated by one-dimensional parametric models.
This result motivates our main contribution, which corresponds to a multi-dimensional parametric model approximating a matrix that corresponds to a symmetric zero-sum game.

\looseness=-1
Failures of the Elo score to correctly `rank' players for a transitive game (\Cref{ex:trans_non_elo,fig:elo_break})
call for models which can handle a larger class of real-world games, going beyond Elo games.
In \Cref{sec:proposed_approach} we propose a \ours, which can correctly rank players on \Cref{ex:trans_non_elo} (see \Cref{fig:extended_elo_works}).
%

\section{PROPOSED APPROACH}
\label{sec:proposed_approach}
%
We now present our main contribution, which is threefold:
\begin{itemize}[noitemsep,itemjoin = \quad,topsep=0pt,parsep=0pt,partopsep=0pt, leftmargin=*]
    \item
    First in \Cref{sub:cstr_disc_game} we thoroughly study the disc game, $\payoffprob{i}{j}
    =
    \sigma(\player{i} \playerv{j} - \playerv{i} \player{j}) \,, \: i,j \in [n]$  (\Cref{ex:disc_game}).
    We show that depending on the values of $\vect{u}$ and $\vect{v}$,
    the disc game is either fully transitive or fully cyclic (\Cref{prop:disc_game}).
    \item Then, in \Cref{sec:theoretical_analysis} we show that any empirical matrix from real-world zeros-sum games can be decomposed as a sum of disc games, with \emph{at most one transitive disc game} (\Cref{thm:at_most_one_transitive}).
    This motivates our disc rating system which corresponds to the extraction of the empirical matrix's main component (disc game).
    If this main component corresponds to a transitive (resp. cyclic) disc game, then the original game is mainly transitive (resp. cyclic).
    \item Finally in \Cref{sub:optimization_details} we provide the optimization details needed to compute the proposed scores (\Cref{alg:otho_decomp}).
\end{itemize}
Given a payoff matrix $\mat{P}$ (\Cref{def:empirical_game}), the following diagram summarizes the proposed approach
\begin{align}
    \mat{P}
    \xrightarrow[]{\mathrm{logit}} \underbrace{\mat{A}}_{\text{skew-symmetric}}
    \xrightarrow[\text{decomposition (Thm. \ref{thm:antisym_pca})}]{\text{Truncated normal}} \mat{\hat A}
    \xrightarrow[]{\mathrm{sigmoid}} \mat{\hat P}.
    \nonumber
\end{align}
\subsection{Detailed Study of the Disc Game}
\label{sub:cstr_disc_game}
\Cref{sub:sym_games} provides examples of $\vect{u}, \vect{v} \in \mathbb{R}^n$ values for which the disc game (\Cref{ex:disc_game}) is \emph{fully transitive} or \emph{fully cyclic}.
In this section, we show that there are no other options: a disc game is either be \emph{fully cyclic} or \emph{fully transitive}, depending on the of the values of $\vect{u}, \vect{v} \in \mathbb{R}^n$.
\begin{propositionrep}\emph{(A Disc Game is Fully Cyclic or Fully Transitive).}
    \label{prop:disc_game}
    Let $\vect{u}, \vect{v} \in \sR^n$ and
    $\payoffprob{i}{j}
    =
    \sigma(\player{i} \playerv{j} - \playerv{i} \player{j})\,,\; i,j \in [n]$ be a disc game.
    Let $U:= \hull\{(\player{i},\playerv{i})\}$ be the convex hull of the players. If $(0,0)$ is \emph{not} in the border of $U$,\footnote{if it is the case we can, for instance, apply an infinitesimal perturbation on the points $(\player{i},\playerv{i})$.}
    then the disc game is either fully transitive or fully cyclic. Precisely, the disc game is
    \begin{enumerate}
        \item fully cyclic if and only if $(0,0) \in \Int(U)$,
        \item fully transitive if and only if $(0,0) \notin \Int(U)$.
    \end{enumerate}
\end{propositionrep}
\begin{appendixproof}
Let us start this proof by showing that there always exists a cycle a size $3$ in a larger cycle.
\begin{lemma}[A cycle of size $n$ implies a cycle of size 3]
If there exists $\player{1},\ldots,\player{n} \in U$ such that $\player{1} \xrightarrow{\!\mathrm{beats}\!} \ldots  \xrightarrow{\!\mathrm{beats}\!} \player{n} \xrightarrow{\!\mathrm{beats}\!} \player{1}$, then there exists $u,v,w \in U$ such that $u \xrightarrow{\!\mathrm{beats}\!} v \xrightarrow{\!\mathrm{beats}\!} w \xrightarrow{\!\mathrm{beats}\!} u$.
\end{lemma}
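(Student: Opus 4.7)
The plan is to induct on the cycle length $n$, exploiting the fact that, generically, the ``beats'' relation of a disc game is a tournament. Indeed, by definition $\player{i}$ beats $\player{j}$ exactly when $\player{i}\playerv{j} - \playerv{i}\player{j} > 0$, so for any pair of players whose points are not collinear with the origin, one of the two strictly beats the other. The degenerate collinear case can be dispensed with by the same infinitesimal perturbation of the $(\player{i},\playerv{i})$ already invoked in the statement of \Cref{prop:disc_game}.

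The base case $n=3$ is immediate, since the given sequence is itself a 3-cycle. For the inductive step, assume the claim for cycles of length $n-1 \geq 3$ and consider a cycle $\player{1} \xrightarrow{\!\mathrm{beats}\!} \player{2} \xrightarrow{\!\mathrm{beats}\!} \cdots \xrightarrow{\!\mathrm{beats}\!} \player{n} \xrightarrow{\!\mathrm{beats}\!} \player{1}$. The natural move is to look at the outcome of $\player{1}$ versus $\player{3}$. If $\player{3}$ beats $\player{1}$, then $\player{1} \xrightarrow{\!\mathrm{beats}\!} \player{2} \xrightarrow{\!\mathrm{beats}\!} \player{3} \xrightarrow{\!\mathrm{beats}\!} \player{1}$ is already the required 3-cycle, and we are done. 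Otherwise $\player{1}$ beats $\player{3}$, and one may shortcut the first two edges to obtain the strictly shorter cycle $\player{1} \xrightarrow{\!\mathrm{beats}\!} \player{3} \xrightarrow{\!\mathrm{beats}\!} \player{4} \xrightarrow{\!\mathrm{beats}\!} \cdots \xrightarrow{\!\mathrm{beats}\!} \player{n} \xrightarrow{\!\mathrm{beats}\!} \player{1}$ of length $n-1$ (all of whose edges remain valid ``beats'' relations); the inductive hypothesis then produces the sought 3-cycle.

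The main---and essentially only---obstacle is the handling of ties, i.e., the borderline case $\player{1}\playerv{3} = \playerv{1}\player{3}$ in which neither branch of the dichotomy applies. This is a non-generic phenomenon and can be removed by an arbitrarily small perturbation of the points $(\player{i},\playerv{i})$, which preserves the strict inequalities defining the edges of the original cycle and thus does not destroy the hypothesis of the lemma. Beyond this minor technicality, the argument is purely combinatorial and uses no geometric property of the convex hull $U$, so the lemma holds in the generality needed for the remainder of the proof of \Cref{prop:disc_game}.
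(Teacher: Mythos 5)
Your proof is correct and follows essentially the same route as the paper's: induction on the cycle length, with a dichotomy on a single chord that either closes a $3$-cycle immediately or shortens the cycle by one (you chord at $\player{1}$ versus $\player{3}$, the paper at $\player{1}$ versus $\player{n}$, which is immaterial). Your explicit treatment of the tie case via an infinitesimal perturbation is in fact slightly more careful than the paper, which silently assumes the ``beats'' relation is a tournament.
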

\begin{proof}
We will use a recursive argument. It is true for $n=3$. Let us assume it is true for a given $n \in \sN$, then given $\player{1} \xrightarrow{\!\mathrm{beats}\!} \ldots  \xrightarrow{\!\mathrm{beats}\!} \player{n+1} \xrightarrow{\!\mathrm{beats}\!} u_1$ we have two possible cases
\begin{enumerate}
    \item $\player{1} \xrightarrow{\!\mathrm{beats}\!} \player{n}  \xrightarrow{\!\mathrm{beats}\!} \player{n+1} \xrightarrow{\!\mathrm{beats}\!} \player{1}$:
    In that case, we have proved the result.
    \item $\player{1} \xleftarrow{\!\mathrm{beats}\!} \player{n} \xrightarrow{\!\mathrm{beats}\!} \player{n+1} \xrightarrow{\!\mathrm{beats}\!} \player{1}$.
    In that case, it means that  $\player{1} \xrightarrow{\!\mathrm{beats}\!} \ldots  \xrightarrow{\!\mathrm{beats}\!} \vect{u}_{n} \xrightarrow{\!\mathrm{beats}\!} \player{1}$ and thus we can use the recurrence hypothesis that there exists a cycle of size $3$ inside a cycle of size $n$.
\end{enumerate}
\end{proof}

We can now show Proposition~\ref{prop:disc_game}.

The idea of the proof is to use the polar coordinates for all the points $(\player{i},\playerv{i})=(r_i,\theta_i)\in \sR^2$. We can notice that $\payoffprob{i}{j} = \sigma(\player{i} \playerv{j} - \playerv{i} \player{j}) = \sigma(r_i r_j \sin(\theta_i - \theta_j))$. Thus $\payoffprob{i}{j}>1/2$ if and only if $\theta_j + \pi >\theta_i> \theta_j$.
If the game is cyclic it means that there exists a cycle of size 3. Without any loss of generality let us assume $\theta_1=0$. Then we have that $\pi>\theta_2>0$ and $-\pi<\theta_3<0$. Finally the fact that $\theta_3<\theta_2+\pi$ implies that either $\theta_2>\pi/2$ or $\theta_3 <\pi/2$.
Thus this forms a triangle that contains $(0,0)$.
Thus we can show that $(0,0)$ is in the interior of the convex set defined by three points of the cycle. If $(0,0)$ is in the interior of $U$ then we can find a cycle.
\end{appendixproof}
The main takeaway from this result is that the convex hull of the players $U$ is the key object to determine if the disc game is transitive or cyclic.
The dichotomy happens around the origin.
The next proposition states that if a disc game is transitive, then $\vect{u}$ and $\vect{v}$ can be reparametrized.
\begin{propositionrep}\emph{(Reparametrization).}
\label{prop:positive_trans}
Let $\vect{u},\vect{v} \in \sR^n $ and
    $\payoffprob{i}{j}
    :=
    \sigma(\player{i} \playerv{j} - \playerv{i} \player{j})\,,\; i,j\in [n]$ be a \textbf{transitive} disc game.
    Then there exists a reparametrization of the disc game $(\playertilde{i},\playertildev{i})$ such that, for all $i \in [n]$, $ \playertildev{i} > 0$, and
    \begin{equation}
        \sigma(
            \player{i} \playerv{j} - \playerv{i} \player{j}
            )
        = \payoffprob{i}{j}
        =
        \sigma(
            \playertilde{i} \playertildev{j} - \playertildev{i} \playertilde{j})
    \enspace.
    \end{equation}
\end{propositionrep}
\begin{appendixproof}
To prove this result we will use \Cref{prop:disc_game}. Since the disc game is considered transitive and that $(0,0)$ is not in the border of $U$, we know that $(0,0) \notin U = \hull \{(\player{i},\playerv{i})\}$. Thus, by the hyperplane separation theorem~\citep[Example 2.20]{boyd2004convex} there exists a direction $a \in \sR^2\,,\; \|a\|_2=1$ such that, $\langle a,[\player{i},\playerv{i}]\rangle >0\,,\; \forall i \in [n]$. Setting $b := [a_2, -a_1]$, we get that $(a,b)$ is an orthonormal basis of $\sR^2$. Let $(\playertilde{i},{\vect{v}}_i):= (\langle a,[\player{i},\playerv{i}]\rangle, \langle b, [\player{i},\playerv{i}])$ be the coordinate of $[\player{i},\playerv{i}]$ in this new basis. Finally, we just need to remark that $\player{i} \playerv{j} - \playerv{i} \player{j}$ does not depend on the choice of basis. It is because,  $\player{i} \playerv{j} - \playerv{i} \player{j} = \sigma(\|OM_i\|\|OM_j\| \sin \widehat{M_iOM_j})$ where $M_i \in \sR^2$ corresponds to the point with coordinate $[(\player{i},\playerv{i})]$.
\end{appendixproof}
The proof of this result relies on \Cref{prop:disc_game} and the hyperplane separation theorem~\citep[Example 2.20]{boyd2004convex}.
Because the reparametrization $(\tilde {\vect{u}}_i,\playertildev{i})$ yields the same payoffs, it corresponds to the same disc game.
\Cref{prop:positive_trans} means that, without any loss of generality, one can consider that $\playerv{i} >0\,,\; i \in [n]$ for any transitive disc games.

In the next section, we will see that any empirical payoff $\rmP$ (\Cref{def:empirical_game}) can be transformed into a skew-symmetric matrix $\rmA$ and decomposed as a sum of disc games. We will then use~\Cref{prop:disc_game} on the main disc game to assess the transitivity/cyclicity of the original empirical game $\rmA$.
%
\subsection{Disc Decomposition}
\label{sec:theoretical_analysis}
First, we recall a standard result on the decomposition of skew-symmetric matrices (\Cref{thm:antisym_pca}): any real skew-symmetric matrix can be decomposed as a sum of matrices of the form
$\vect{u} \vect{v}^\top - \vect{v} \vect{u}^\top$,
$\vect{u}, \vect{v} \in \sR^n$.
Combining this result with \Cref{prop:disc_game}, we finally show that zero-sum game payoff matrices have at most a single transitive disc game component (\Cref{thm:at_most_one_transitive}).
\begin{theoremrep}\emph{(Normal Decomposition, \citealt{greub1975linear}).}
    \label{thm:antisym_pca}
    Suppose $\mat{A} \in \sR^{n \times n}$ is such that $\payoffanti{}{} = - \payoffanti{}{}^\top$.
    Then, with $k = \floor{n/2}$, there exists
    $\lambda_1 \geq \ldots \geq \lambda_k$
    and
    $(\vect{u}^{(l)}, \vect{v}^{(l)}) \in \sR^n \times \sR^n
    \,,\,1
    \leq
    l
    \leq
    k$,
    such that $(\vect{u}^{(l)}, \vect{v}^{(l)})$ is an orthogonal family and
        $\mat{A}
        =
        \sum_{l=1}^{k}
             (
            \vect{u}^{(l)} \vect{v}^{(l) \top}
            - \vect{v}^{(l)} \vect{u}^{(l) \top} ) $.

\end{theoremrep}
\begin{appendixproof}
    Proof of \Cref{thm:antisym_pca} can be found in \citet[Sec. 2.5]{cassiniAlgebre3} or \citet[\S8.16]{greub1975linear}.
\end{appendixproof}
%
A proof of \Cref{thm:antisym_pca} can be found in \citet[Sec. 2.5]{cassiniAlgebre3}.
\Cref{thm:antisym_pca} is sometimes referred to as the Schur decomposition of skew-symmetric matrices \citep[Prop. 2]{Balduzzi2018}.

In the context of symmetric zero-sum games, the first takeaway of Theorem~\ref{thm:antisym_pca} is that
such a game is a sum of disc games.
The second consequence (\Cref{thm:at_most_one_transitive}) is less obvious but maybe even more important:
among all these disc games \emph{at most one} is transitive.
    \begin{theoremrep}\emph{(An Empirical Game has at most a Single Transitive Disc Game)}.
        \label{thm:at_most_one_transitive}
        Let $\mat{P}$ be the payoff matrix of a symmetric zero-sum game, and let $(\vect{u}^{(l)}, \vect{v}^{(l)}) \in \sR^{n} \times \sR^{n} $ be the normal decomposition of the skew-symmetric matrix $\logit(\mat{P})$ (\Cref{thm:antisym_pca}), then there exists \emph{at most} one pair $(\vect{u}^{(l)}, \vect{v}^{(l)}) \in \sR^{n} \times \sR^{n} $ such that the disc game defined by
                $\payoffprob{i}{j} =
                    \sigma(\player{i}^{(l)} \playerv{j}^{(l)}
                    - \playerv{i}^{(l)} \player{j}^{(l)} )$,
        $i, j \in [n]$,
        is transitive.
    \end{theoremrep}
    \begin{appendixproof}
    \rebutal{
    The proof of \Cref{thm:at_most_one_transitive} relies on
    \begin{itemize}
        \item  The  fact that $(\vect{u}^{(l)}, \vect{v}^{(l)})$ are orthogonal.
        \item The reparametrization property (\Cref{prop:positive_trans}).
    \end{itemize}}
    \rebutal{
    The normal decomposition $(\vect{u}^{(l)}, \vect{v}^{(l)})$ is composed of orthogonal vectors ($\vect{v}^{(l)}$ are orthogonal to each others and to $\vect{u}^{(l)}$).}

    \rebutal{
    Suppose that there exist two transitive pairs of components, $(\vect{u}^{(1)}, \vect{v}^{(1)})$ and $(\vect{u}^{(2)}, \vect{v}^{(2)})$. Using  \Cref{prop:positive_trans} we consider the reparametrization such that $\playerv{i}^{(1)} > 0$, for all $i \in [n]$, and $\playerv{i}^{(2)} > 0$, for all $i \in [n]$, hence one has $\vect{v}^{(1) \top } \vect{v}^{(2)} > 0$.
    In addition, since $\vect{v}^{(1)}$ and $\vect{v}^{(2)}$ are orthogonal we have that $\playerv{i}^{(1) \top } \playerv{i}^{(2)} = 0$, which yields a contradiction.
    Hence there exists at most one transitive pair of components.}
    \end{appendixproof}
\Cref{thm:at_most_one_transitive} provides us with multiple insights.
If one has access to an empirical game $\mat{P}$, and one can compute the normal decomposition (\Cref{thm:at_most_one_transitive}) of the skew-symmetric matrix $\logit(\mat{P})$,
then the largest value $\lambda_1$ and its associated vectors $(\vect{u}^{(1)}, \vect{v}^{(1)})$ encapsulate information on the largest component.
If the disc game associated with the largest component is transitive, we will say that the considered empirical game based on $\mat{P}$ is transitive.
Note that \Cref{thm:at_most_one_transitive} can be generalized to any invertible functions $f$ which transforms zero-sum game payoff matrices into skew-symmetric matrices, such as for instance $f:\mat{P} \mapsto 2 \mat{P} - 1$.

\Cref{sub:optimization_details} details main component computation when dealing with real data (\emph{e.g.}, missing and inexact entries).
%
\subsection{Computational Details}
\label{sub:optimization_details}
%
%
\begin{algorithm}[t]
    \caption{Alternate Minimization}
    \label{alg:altmin}
    \SetKwInOut{Input}{input}
    \SetKwInOut{Init}{init}
    \SetKwInOut{Parameter}{param}
    \Input{$\mat{A} \in \sR^{n \times  n},
        \mat{us}, \mat{vs} \in \sR^{n \times  k}, l \in \mathbb{N}$}
    \Init{$\vect{u}, \vect{v} \neq 0_{n}$}
    \For{$k = 1, 2 \dots, $}
    {
    $\vect{u} \leftarrow \argmin_{\vect{u}}
    \mathcal{L}(\mat{P}, \mat{A} + \vect{u} \vect{v}^\top - \vect{v} \vect{u}^\top)
    $
    \\
    $
    +
    \sum_{m=1}^{l-1} \frac{\langle \vect{u}, \vect{vs}_{:m} \rangle^2}{\normin{\vect{vs}_{:m} }^2 }
    +
    \sum_{m=1}^{l-1} \frac{\langle \vect{u}, \vect{us}_{:m} \rangle^2}{\normin{\vect{us}_{:m} }^2 }
    $

    $\vect{v} \leftarrow
    \argmin_{\vect{v}}
     \mathcal{L}(\mat{P}, \mat{A} + \vect{u} \vect{v}^\top - \vect{v} \vect{u}^\top)
    + \tfrac{\langle \vect{u}, \vect{v} \rangle^2}{\normin{\vect{u}}^2 }
    $
    \\
    $
    +
    \sum_{m=1}^{l-1} \tfrac{\langle \vect{v}, \vect{vs}_{:m} \rangle^2}{\normin{\vect{vs}_{:m} }^2 }
    +
    \sum_{m=1}^{l-1} \tfrac{\langle \vect{v}, \vect{us}_{:m} \rangle^2}{\normin{\vect{us}_{:m} }^2 }
    $
    }
\Return{$\vect{u}, \vect{v}$}
\end{algorithm}
%
%
Now we provide the details to compute the proposed score.
Consider an empirical game $(\payoffprob{i}{j})_{\leq i,j \leq n}$.
One can compute the decomposition of $\logit(\vect{P})$ from \Cref{thm:antisym_pca} by solving the following optimization problem, with $\mathcal{L} : \sR \times \sR \rightarrow \sR$, $\mathcal{L}: x, \hat x \mapsto \norm{\logit(x) - \hat x}^2$
\begin{align}
    &\min_{\vect{u}^{(k)}, \vect{v}^{(l)} \in \sR^n}
    f  (\vect{u}, \vect{v})
    \triangleq
     \mathcal{L} (\mat{P}, {\textstyle\sum_{l = 1}^k}\vect{u}^{(l)} {\vect{v}^{(l)}}^\top - \vect{v}^{(l)} {\vect{u}^{(l)}}^\top)
     \nonumber
    \\
     & \quad \quad \quad \quad \quad \quad \triangleq
    \sum_{i, j,l}
     \mathcal{L}(\payoffprob{i}{j}, \player{i}^{(l)} \playerv{j}^{(l)} - \playerv{i}^{(l)} \player{j}^{(l)})
    \enspace,
    \label{eq:opt_pb_decomposition_gen}
    \\
    &
    \text{s. t.} \quad \langle \vect{u}^{(l)}, \vect{u}^{(m)}\rangle  = \langle \vect{v}^{(l)}, \vect{v}^{(m)}\rangle  = 0 \enspace,
    \\
    &
    \quad \quad \: \: \:
    \langle \vect{u}^{(l)}, \vect{v}^{(l)}\rangle  = 0 \,,\quad
    \quad 1\leq l < m \leq k\, \enspace .
\end{align}
The main challenge of this optimization problem is to maintain the orthogonality constraint between the components.
We propose~\Cref{alg:otho_decomp} which sequentially finds the $l^{th}$ main components $(\vect{u}^{(l)}, \vect{v}^{(l)})$ for $l \in [k]$.
The $l^{th}$ pair of components is found with~\Cref{alg:altmin} which maintains the orthogonality constraints by using a penalty term (inspired by the algorithm of~\citet{Gemp2021} which computes PCA for large-scale problems).
Note that \emph{in practice, empirical probability matrices can have $0$ or $1$ entries (for instance if a player $i$ always loses again player $j$), for which $\logit$ is not defined.}
That's why in the experiments (\Cref{sec:experiments}) we use the following loss function
 $\mathcal{L} : x, \hat x \mapsto \mathrm{bce}(x, \sigma(\hat x))$ (where $\mathrm{bce}$ is the binary cross entropy).
 \paragraph{Missing Entries}
 Payoff matrices coming from real-world games played by humans usually contain missing entries: one does not have access to the matchups between all the players.
 For instance, on the \texttt{Lichess} website, there usually are few confrontations between low-ranked and high-ranked players.
 In other words, one only partially has access to $\payoffprob{i}{j}$, for $(i, j)$ in a given set of pairs of players $\mathcal{D}^{\mathrm{obs}} \subset  [n] \times [n]$.
 Note that the optimization problem formulations based on \Cref{eq:opt_pb_decomposition_gen} can handle missing entries (by summing only on the available entries) as in \citet{CandesRecht2008,CandesPlan2009}.
 With $k=1$ pair of components, instead of the problem defined in \Cref{eq_app:opt_pb_decomposition_gen}, the proposed \ours with a partial set of observation $\mathcal{D}^{\mathrm{obs}}$ reads:
 \begin{align}
     \argmin_{
         \vect{u}, \vect{v},
         \text{ s.t. } \vect{u}^\top \vect{v} = 0}
     \sum_{(i, j) \in \mathcal{D}^{\mathrm{obs}}}
      \mathcal{L}(\payoffprob{i}{j}, \player{i} \playerv{j} - \playerv{i} \player{j})
     \enspace.
     \nonumber
 \end{align}

\sloppy
\paragraph{Interpretation of the Disc Decomposition in the Case $k=1$}
Let $(\vect{u}^{\mathrm{Disc}}, \vect{v}^{\mathrm{Disc}})$ be the first pair ou component of the disc decomposition (\Cref{eq:opt_pb_decomposition_gen}).
The probability of player $i$ beating player $j$ is given by
\begin{align}
    \mathbb{P}(i \text{ beats } j) = \sigma(\player{i}^{\mathrm{Disc}} \playerv{j}^{\mathrm{Disc}} - \playerv{i}^{\mathrm{Disc}} \player{j}^{\mathrm{Disc}})
    \enspace.
\end{align}
If it occurs that for all $j\in [n]$, $\playerv{j}^{\mathrm{Disc}} > 0$, which corresponds to a transitive game (see \Cref{thm:at_most_one_transitive} and \Cref{prop:positive_trans}), then the outcome probability can be written as
$\sigma(\player{i}^{\mathrm{Disc}} \playerv{j}^{\mathrm{Disc}} - \playerv{i}^{\mathrm{Disc}} \player{j}^{\mathrm{Disc}})=
\sigma  (\playerv{i}^{\mathrm{Disc}} \playerv{j}^{\mathrm{Disc}}  ( \player{i}^{\mathrm{Disc}} / \playerv{i}^{\mathrm{Disc}} -  \player{j}^{\mathrm{Disc}} / \playerv{j}^{\mathrm{Disc}} ) ) $.
For a player $i$, the ratio
$ \playertilde{i}^{\mathrm{Disc}} \triangleq \player{i}^{\mathrm{Disc}} / \playerv{i}^{\mathrm{Disc}}$
 can be interpreted as its \emph{strength}, and $v_i$ as its \emph{consistency} to beat lower-rated players (and be beaten by higher-rated players):
 \begin{align}
    \mathbb{P}(i \text{ beats } j)
    =
    \sigma(\playerv{i}^{\mathrm{Disc}} \playerv{j}^{\mathrm{Disc}}
    (\playertilde{i}^{\mathrm{Disc}} - \playertilde{j}^{\mathrm{Disc}}))
    \enspace.
\end{align}
If $\playertilde{i}^{\mathrm{Disc}} > \playertilde{j}^{\mathrm{Disc}}$ then $i$ beats $j$.
 The larger $\playerv{i}^{\mathrm{Disc}}$, the larger the probability to win against a lower-rated player ($\playertilde{i}^{\mathrm{Disc}}  > \playertilde{j}^{\mathrm{Disc}} $), but the larger the probability to lose against a higher rated player ($\playertilde{i}^{\mathrm{Disc}}  < \playertilde{j}^{\mathrm{Disc}} $).
 This consistency score can be seen as a way to correct the Elo score which implicitly assumes that the strength of each player is linearly comparable in the logit space.
 For instance, one implicit bias of the Elo score is that if $i$ beats $j$ with probability $\payoffprob{i}{j}$ and $j$ beats $k$ with probability $\payoffprob{j}{k}$ then $i$ beats $k$ with probability
 $\logit(\payoffprob{i}{k}) =
 \logit(\payoffprob{i}{j})
 +
 \logit(\payoffprob{j}{k})$.
 As illustrated in \Cref{ex:trans_non_elo} if such a property of additivity in the logit space is not occurring in the data, the Elo score may have trouble even predicting the right ranking for the players (\Cref{fig:elo_break}). Conversely, when including a notion of \emph{consistency} in our score we manage to correctly predict the ranking between the players in~\Cref{ex:trans_non_elo} (see~\Cref{fig:extended_elo_works}.)

%
\begin{algorithm}[tb]
    \caption{Compute Disc Decomposition}
    \label{alg:otho_decomp}
    \SetKwInOut{Input}{input}
        \SetKwInOut{Init}{init}
        \SetKwInOut{Parameter}{param}
         \Input{$k \in \mathbb{N}$ \textcolor{blue}{($\#$ of pairs of components)}}
         \Init{$\mat{A} = 0_{n \times n}$, \\
         $\mat{us} = 0_{n \times  k}$,
         $\mat{vs} = 0_{n \times  k}$
         }
        \For{$l = 1, 2, \dots, k$ }{
            $\vect{u}^{(l)}, \vect{v}^{(l)} = \mathrm{\Cref{alg:altmin}}(\mat{A}, \mat{us}, \mat{vs}, l)$

            $\mat{A} \leftarrow \mat{A} + \vect{u}^{(l)}  \vect{v}^{(l) \top}  - \vect{v}^{(l)} \vect{u}^{(l) \top} $

            $\mat{us}_{: l}, \mat{vs}_{: l}
            = \vect{u}^{(l)}, \vect{v}^{(l)}$
        }
    \Return{$(\vect{u}^{(l)}, \vect{v}^{(l)})_{1 \leq l \leq k}$}
\end{algorithm}
\begin{figure}[tb]
\centering
\includegraphics[width=0.95\linewidth]{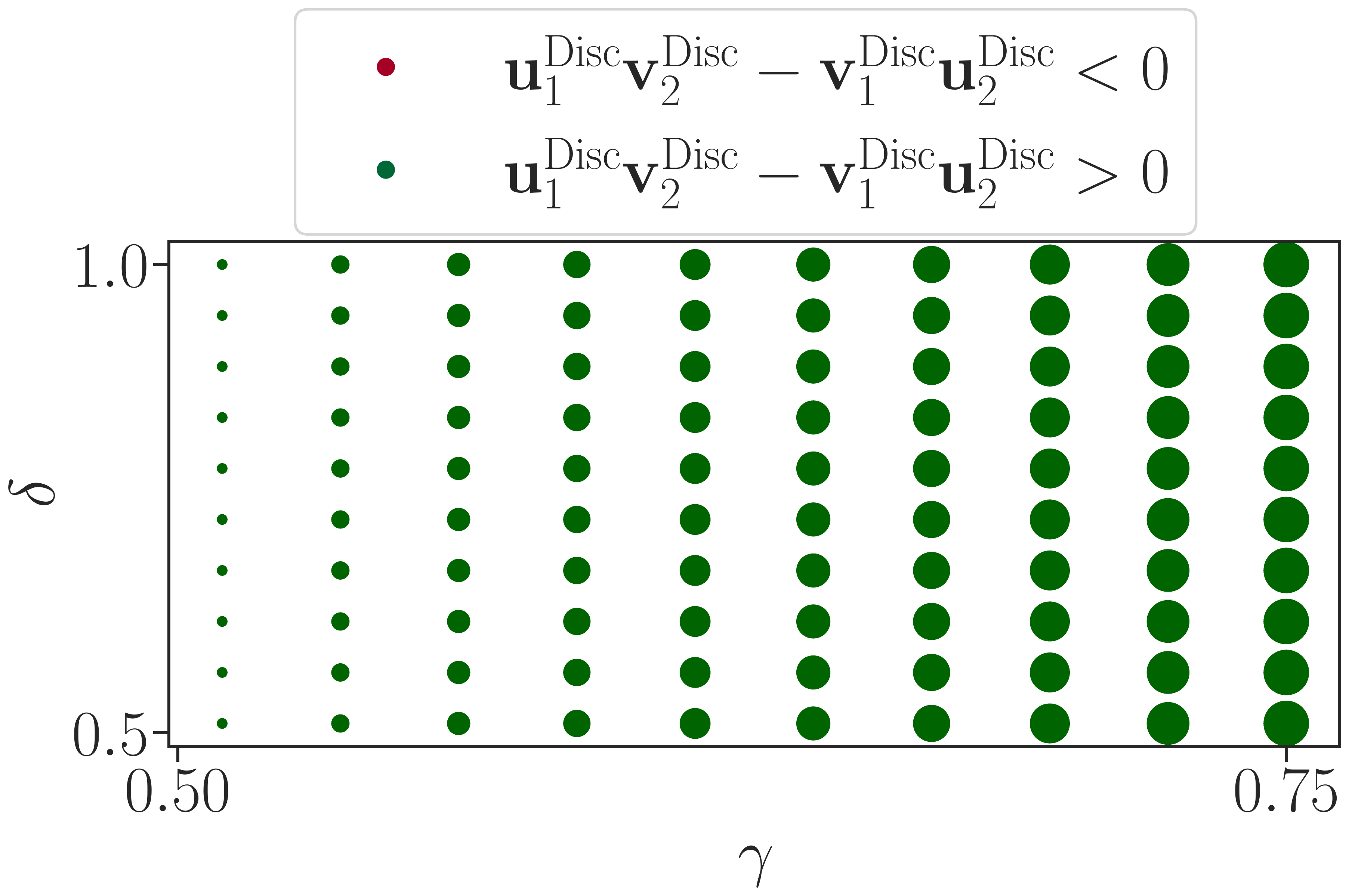}
\caption{
    \textbf{Disc Decomposition Manages to Rank Players for some Transitive Games.}
    Let
    $\vect{u}^{\mathrm{Disc}},
    \vect{v}^{\mathrm{Disc}}$ be the first pair of components computed
    using the stationary \ours ($k=1$ in \Cref{eq:opt_pb_decomposition_gen}).
    The logit of the probability of winning,
    $\player{1}^{\mathrm{Disc}} \playerv{2}^{\mathrm{Disc}} - \playerv{1}^{\mathrm{Disc}} \player{2}^{\mathrm{Disc}}$,  is displayed for multiples probability matrices of the transitive game (\Cref{ex:trans_non_elo}).
    Red dots indicate that $ \player{1}^{\mathrm{Disc}} \playerv{2}^{\mathrm{Disc}} - \playerv{1}^{\mathrm{Disc}} \player{2}^{\mathrm{Disc}} <0$, and green dots $ \player{1}^{\mathrm{Disc}} \playerv{2}^{\mathrm{Disc}} - \playerv{1}^{\mathrm{Disc}} \player{2}^{\mathrm{Disc}} > 0$.
    The size of the dots is proportional to $ |\player{1}^{\mathrm{Disc}} \playerv{2}^{\mathrm{Disc}} - \playerv{1}^{\mathrm{Disc}} \player{2}^{\mathrm{Disc}}|$.
    Contrary to the Elo score (\Cref{fig:elo_break}), the proposed \ours can correctly rank players $1$ and $2$.
}
\label{fig:extended_elo_works}
\end{figure}
\begin{figure*}[t]
\centering
\includegraphics[width=0.9\linewidth]{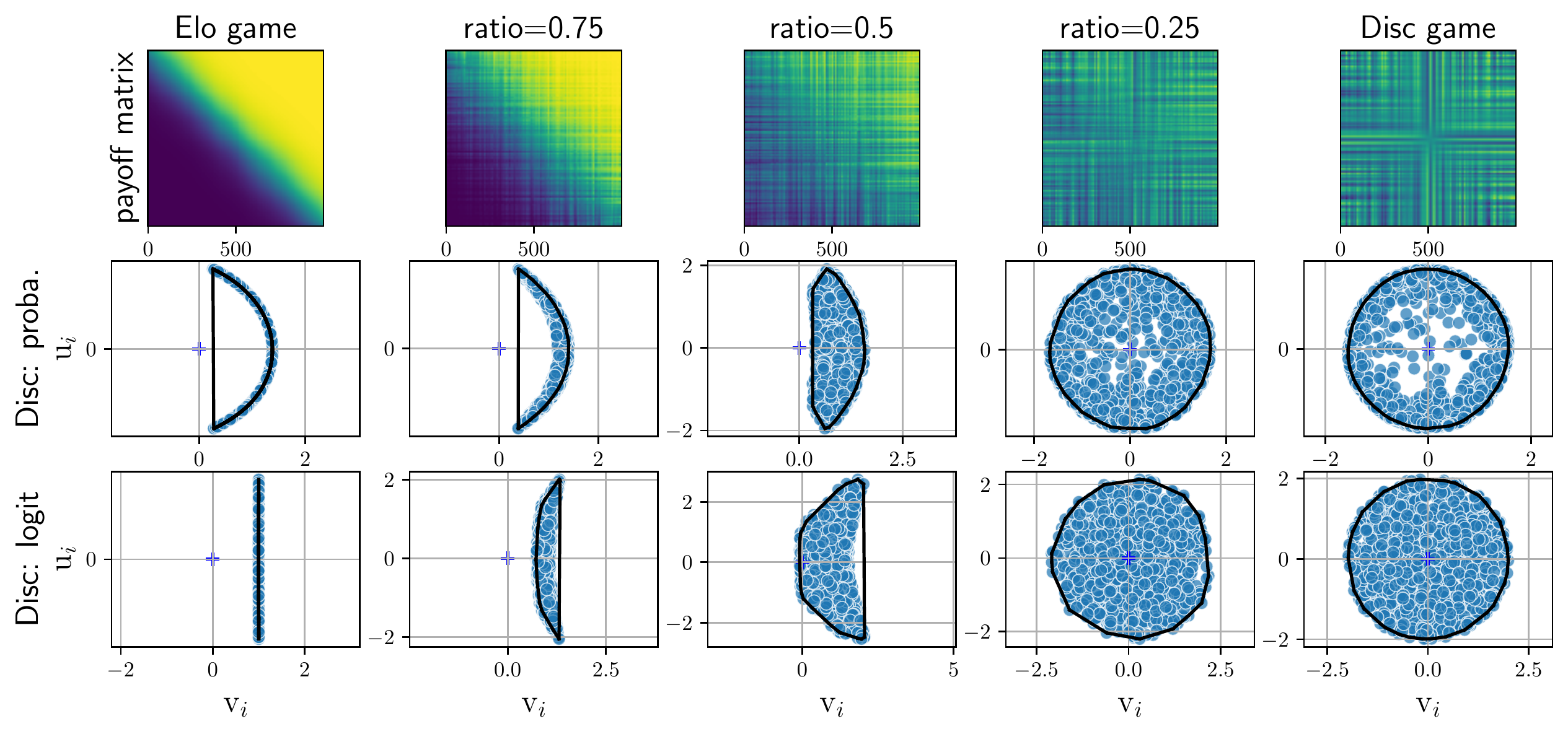}
\caption{\textbf{From Elo to Disc Game.}
Payoff matrices (top) and visualization using Algorithm~\ref{alg:altmin} in the probability space (middle) and logit space (bottom), for multiple games with payoff
$\logit(\mat{P})
=
\mathrm{ratio} \cdot \logit(\mat{P}^{\mathrm{Elo}})
+ (1 - \mathrm{ratio}) \cdot \logit(\mat{P}^{\mathrm{Disc}})
$.}
\label{fig:elo_to_disc}
\end{figure*}
\paragraph{Online Update}
The main focus of this work is to study the intrinsic characteristics of the game at a given time $t$.
This is opposed to the line of work that considers the games' sequential aspect, quantifying the Elo score uncertainty with a Bayesian model (TrueSkill, \citealt{herbrich2006trueskill}), or relying on the more recent games \citep{Sismanis2010}.
However, an update rule can still be derived and interpreted for the proposed \ours.
The stationary version of the proposed \ours is written in \Cref{eq:opt_pb_decomposition_gen}, the online version \citep{Jabin2015} of the \ours writes, with a step size
$\eta >0$,  $\mathrm{S}_{i, j}^t$
the result of the confrontation between players $i$ and $j$,
and
$\payoffprobhat{i}{j}^t
=
\sigma(\playerv{i}^{t} \playerv{j}^{t}(\playertilde{i}^t - \playertilde{j}^t ) )$:
\begin{align}
    \playertilde{i}^{t+1}
    &=
    \playertilde{i}^{t}
    + \eta (\mathrm{S}_{i, j}^t - \payoffprobhat{i}{j}^t ) \playerv{i}^{t} \playerv{j}^{t}
    \enspace,
    \\
    \playerv{i}^{t+1}
    &=
    \playerv{i}^{t}
    + \eta (\mathrm{S}_{i, j}^t - \payoffprobhat{i}{j}^t) \playerv{j}^{t}
    (\playertilde{i}^{t} - \playertilde{j}^{t})
    \enspace.
\end{align}
As for the usual Elo score, the increase in the "strength"
$\playertilde{i}$
after a confrontation is proportional to
$\mathrm{S}_{i, j}^t - \payoffprobhat{i}{j}^t$.
The "consistency" $\playerv{i}$ increases when beating lower-rated players and decreases when losing against lower-rated players. Informally, this quantity encompasses how much one should trust the current Elo score and scales up or down the ``strength'' update.
%

\section{EXPERIMENTS}
\label{sec:experiments}

\begin{table*}[tb]\centering
  \caption{Prediction performance (MSE)
  on unseen data interpolating from a pure Elo game to a disc game: $\logit(\mat{P})
  =
  \mathrm{ratio} \cdot \logit(\mat{P}^{\mathrm{Elo}})
  + (1 - \mathrm{ratio}) \cdot \logit(\mat{P}^{\mathrm{Disc}})
  $.}
  \label{tab:prediction_synthetic}
  \resizebox{1 \textwidth}{!}{
  \large
  \begin{tabular}{cc *{5}{|cc}}
      \toprule
      &
      &  \multicolumn{2}{c}{Elo game}
      & \multicolumn{2}{c}{$\mathrm{ratio}=0.75$}
      & \multicolumn{2}{c}{$\mathrm{ratio}=0.5$ }
      & \multicolumn{2}{c}{$\mathrm{ratio}=0.25$ }
      & \multicolumn{2}{c}{Disc game}
      \\
     Model & \# Param.
     &  Train & Test
     &  Train & Test
     &  Train & Test
     &  Train & Test
     &  Train & Test
     \\
     Elo  & $n$ &
     \num{1.6e-10} & \num{1.6e-10} &
     \num{1.0e-02} & \num{1.0e-02} &
     \num{3.6e-02} & \num{3.6e-02} &
     \num{6.3e-02} & \num{6.3e-02} &
     \num{7.9e-02} & \num{7.9e-02}
     \\
     Elo ++  & $n$ &
     \num{2.1e-11} & \num[math-rm=\mathbf]{2.1e-11} &
     \num{1.0e-02} & \num{1.0e-02} &
     \num{3.6e-02} & \num{3.6e-02} &
     \num{6.3e-02} & \num{6.3e-02} &
     \num{7.9e-02} & \num{7.9e-02}
     \\
     \rowcolor{green!20}
     Disc decomposition (ours, k=1)  & $2n$ &
     \num{2.1e-10} & \num{2.2e-10} &
     \num{9.7e-03} & \num{9.8e-03} &
     \num{3.4e-02} & \num{3.4e-02} &
     \num{1.2e-02} & \num{1.2e-02} &
     \num{2.5e-06} & \num{2.6e-06}
     \\
     \citet{Balduzzi2019}  & $2n$ &
     \num{2.7e-03} & \num{2.7e-03} &
     \num{1.0e-02} & \num{1.0e-02} &
     \num{3.4e-02} & \num{3.5e-02} &
     \num{1.4e-02} & \num{1.4e-02} &
     \num{4.5e-03} & \num{4.6e-03}
     \\
     $m$-Elo (\citealt{Balduzzi2018}, k=1)  & $3n$ &
     \num{2.7e-03} & \num{2.7e-03} &
     \num{1.0e-02} & \num{1.0e-02} &
     \num{3.4e-02} & \num{3.5e-02} &
     \num{1.4e-02} & \num{1.4e-02} &
     \num{4.5e-03} & \num{4.6e-03}
     \\
     \citet{Balduzzi2019}  & $4n$ &
     \num{1.3e-04} & \num{1.3e-04} &
     \num{3.9e-03} & \num{4.0e-03} &
     \num{5.8e-03} & \num{6.0e-03} &
     \num{4.5e-03} & \num{4.6e-03} &
     \num{8.5e-04} & \num{8.8e-04}
     \\
     \rowcolor{green!20}
     Disc decomposition (ours, k=2)  & $4n$ &
     \num{1.6e-10} & \num{1.7e-10} &
     \num{6.5e-04} & \num[math-rm=\mathbf]{6.8e-04} &
     \num{2.4e-03} & \num[math-rm=\mathbf]{2.5e-03} &
     \num{5.5e-04} & \num[math-rm=\mathbf]{5.6e-04} &
     \num{4.5e-07} & \num[math-rm=\mathbf]{4.6e-07}
     \\
  \bottomrule
  \end{tabular}
  }
\end{table*}
\begin{figure*}[tb]
  \centering
  \includegraphics[width=1\linewidth]{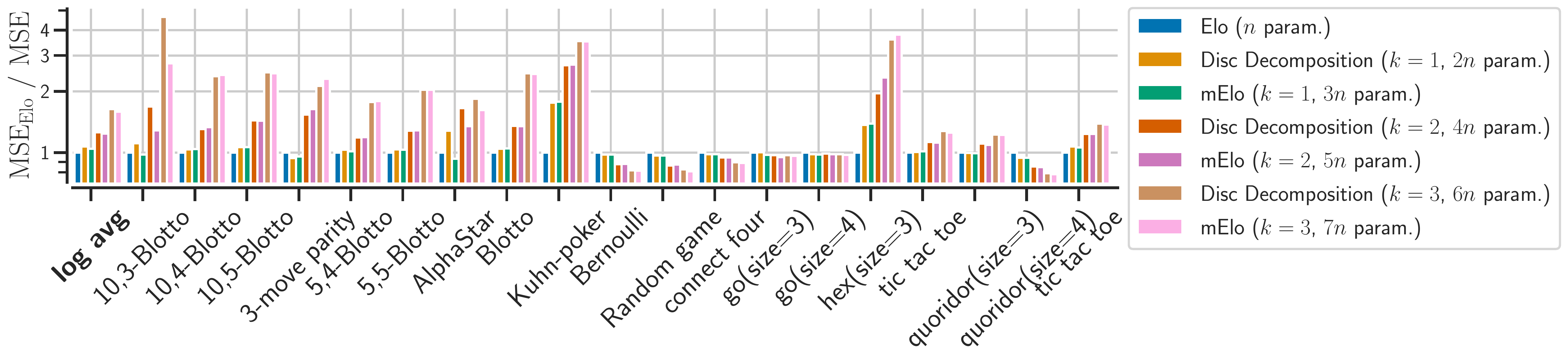}
  \caption{
      \rebutal{\textbf{Prediction Performances (the Higher, the Better) on Unseen Data.}
      The prediction performances on unseen data of the Elo (which has one parameter), the mElo (\citealt{Balduzzi2018}, which has $(2k + 1)n$ parameters), and the proposed disc decomposition (\Cref{alg:otho_decomp}, which has $2kn$ parameters) are compared on a wide range of payoff matrices from \citet{Czarnecki2020}.
      With fewer parameters, the proposed \ours yields better mean squared error than \citet{Balduzzi2018}.}
 }
  \label{fig:pred_spinning_top}
\end{figure*}
%
In this section, we perform experiments
\footnote{Code can be found at \url{https://github.com/QB3/discrating}.}
on real payoff matrices from \citet{Czarnecki2020} and real chess and StarCraft II data.
For each type of data, we propose two kinds of experiments.
First, we qualitatively compare and interpret the first pair of components the proposed \ours ($k=1$ in \Cref{eq:opt_pb_decomposition_gen}), directly in the probability space $\payoffprob{i}{j}$
($\mathcal{L} (x, \hat x) \mapsto \norm{x - \hat x - \1 / 2}^2$ in
\Cref{eq:opt_pb_decomposition_gen}) or in the logit space $\logit \payoffprob{i}{j}$ ($\mathcal{L} : x, \hat x \mapsto \mathrm{bce}(x, \sigma(\hat x))$ in \Cref{eq:opt_pb_decomposition_gen}).
When possible, we reparametrized the \ours $\vect{u}$ and $\vect{v}$ as in \Cref{prop:positive_trans}.
Then some entries of the payoff matrix are hidden (as well as the symmetric entries), and each model is trained and evaluated to predict the missing entries ($20\%$ of the dataset).
For this second experiment, we compare the following models:
\begin{itemize}[noitemsep,itemjoin = \quad,topsep=0pt,parsep=0pt,partopsep=0pt, leftmargin=*]
    \item The usual Elo score \citep{elo1978rating}, which relies on the assumption that the game is purely transitive.
   \item A variant of the Elo score with a quadratic loss (as in Elo ++, \citealt{Sismanis2010}).
    \item \cite{Balduzzi2018} which has $2 k + 1$ parameters ($k \in \{1, 2, 3\}$), where the payoff matrix is decomposed as
    $\mathrm{grad}(\mat{A}) + \mathrm{rot}(\mat{A})$,
    and then
    $\mathrm{rot}(\mat{A})$ is approximated by the Schur decomposition.
    \item \cite{Balduzzi2019}, where the payoff matrix is directly approximated by the Schur decomposition in the probability space (\emph{i.e.}, $2 \mat{P} - 1$), which has $2k$ parameters.
    \item The proposed \ours (\Cref{alg:otho_decomp}) which has $2k$ parameters ($k \in \{1, 2, 3\}$ pairs of components).
\end{itemize}
One can find the mathematical details of each model in \Cref{app:details_comptetitors}.
Note that among the compared methods for matchups probability estimation, only Elo, Elo++, and the proposed \ours (in the transitive case) are ranking methods.
\subsection{Data from \citet{Czarnecki2020}}
%
%
\paragraph{Comments on \Cref{fig:elo_to_disc}}
First, we investigate the visualization and the performance of some payoff matrices from \citet{Czarnecki2020}.
\Cref{fig:elo_to_disc} (and \Cref{fig:transitive_cyclic_app} in \Cref{app:additional_expes}) display the payoff matrices (top row), the representation of \citet{Balduzzi2019} (middle row), and the proposed \ours (bottom row) for games taken from \citet{Czarnecki2020}.
In these representations, one point represents one player: each player $i$ is summarized by two scores $(\playerv{i}, \player{i})$, which correspond to the coordinate of each point.
\Cref{fig:elo_to_disc} displays the representations for multiple payoff matrices: from a pure Elo game (\Cref{ex:elo_game}), to a pure cyclic disc game \Cref{ex:disc_game}, through an average of the payoff matrices in the log space: $\logit(\mat{P})
=
\mathrm{ratio} \cdot \logit(\mat{P}^{\mathrm{Elo}})
+ (1 - \mathrm{ratio}) \cdot \logit(\mat{P}^{\mathrm{Disc}})
$.
For a pure Elo game (top left), one can see that the proposed model recovers perfectly a transitive game (bottom left): the model recovers $\playerv{i}=1$ for all the players $i$.
On the other side of the spectrum, for a disc game (top right), the proposed method can find a disc game representation.
\paragraph{Comments on \Cref{tab:prediction_synthetic}}
It shows the prediction performances (MSE on unseen data) for each of these payoff matrices.
Whereas the Elo score can predict almost perfectly the Elo game, the Elo score fails as soon as the Elo game assumption is violated.
Conversely, the proposed \ours can correctly predict the Elo game and the disc game.
The proposed model can better predict future outcomes, from the Elo game to the disc game.
For the Elo game, Elo, Elo ++ and our method yield similar performances ($\sim 10^{-10}$).
For such small orders of magnitude, we believe that the difference between the methods is mostly due to numerical errors.

\Cref{fig:transitive_cyclic_app} (in \Cref{app:additional_expes}) shows the representations for multiple real games played by computers (machine learning algorithms) from \citet{Czarnecki2020}, including games considered transitive (Go and AlphaStar) and games considered cyclic (Blotto and Kuhn-Poker).
One can see that the convex hull of the proposed representation does not contain $0$ for games considered as mostly transitive: this validates \Cref{prop:disc_game}.
\paragraph{Comments on \Cref{fig:pred_spinning_top,fig_pred_spinning_top_avg}}
\Cref{fig:pred_spinning_top} shows the mean squared error (MSE) between the estimated probabilities and the empirical ones.
More precisely, for each compared method, it shows the MSE of the Elo score divided by the MSE of the method, hence the larger the better.
The predictions from \Cref{fig:pred_spinning_top} show that
the proposed representation yields better predictions for the outcome probabilities.
The improvements of the proposed \ours are more significant for games considered cyclic, such as Blottos or Kuhn-Poker.
`Bernoulli' and `Random Game' are generated at random.
  Since there is only noise in these games, the larger the number of parameters, the larger the overfitting (as observed in \Cref{fig:pred_spinning_top}). We currently do not have explanations for why a larger number of parameters does not yield better predictions on `Connect Four' and `quoridor(size$=4$)' games.
In \Cref{fig_pred_spinning_top_avg}, the averaged MSE across all the games from \citet{Czarnecki2020} shows that, with fewer parameters, the proposed \ours achieves better performances than \citep{Balduzzi2018}.

\begin{figure}[tb]
      \centering
      \includegraphics[width=1\linewidth]{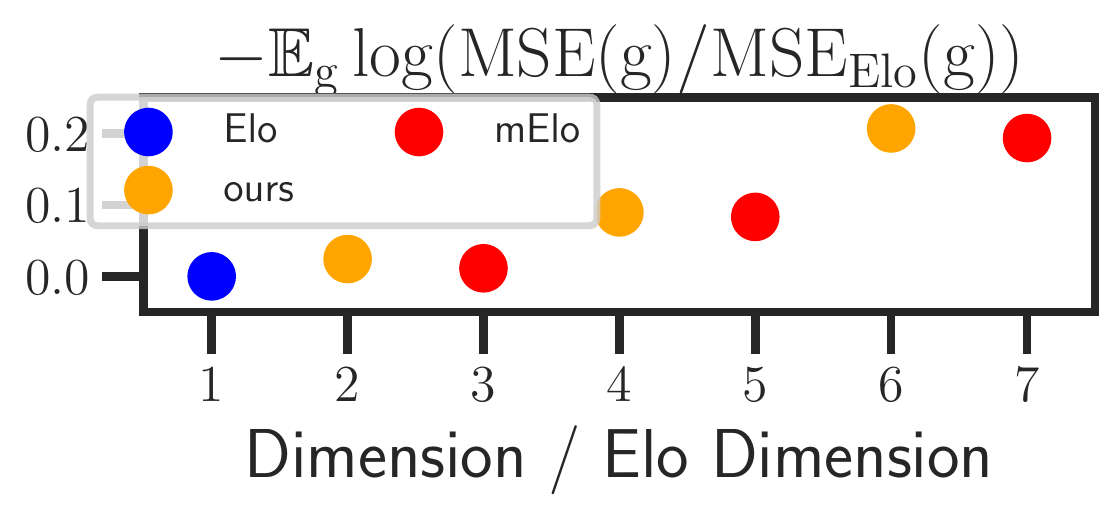}
      \caption{\textbf{Normalized (Log)-Average Prediction Performances on all the Datasets from \citet{Czarnecki2020} (the Higher, the Better).}
      Using fewer parameters, the proposed disc decomposition yields better prediction performances for unseen matchups.
      }
      \label{fig_pred_spinning_top_avg}
  \end{figure}
\begin{figure*}[tb]
      \begin{subfigure}{.5\textwidth}
          \includegraphics[width=1\linewidth]{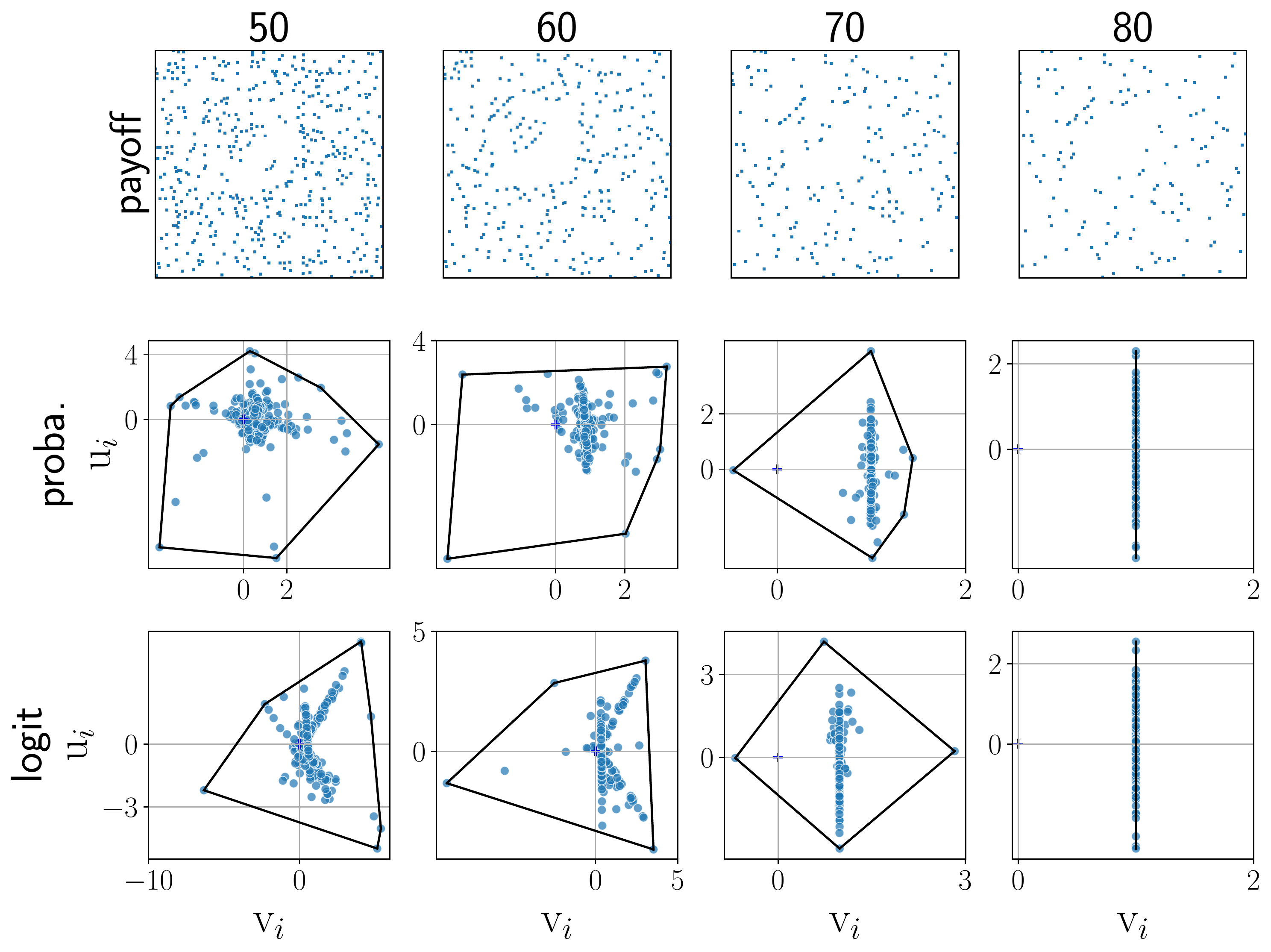}
        \caption{\textbf{Real Lichess  Data.}}
        \label{fig:real_chess_data}
      \end{subfigure}
      \hfill
      \begin{subfigure}{.5\textwidth}
          \includegraphics[width=1\linewidth]{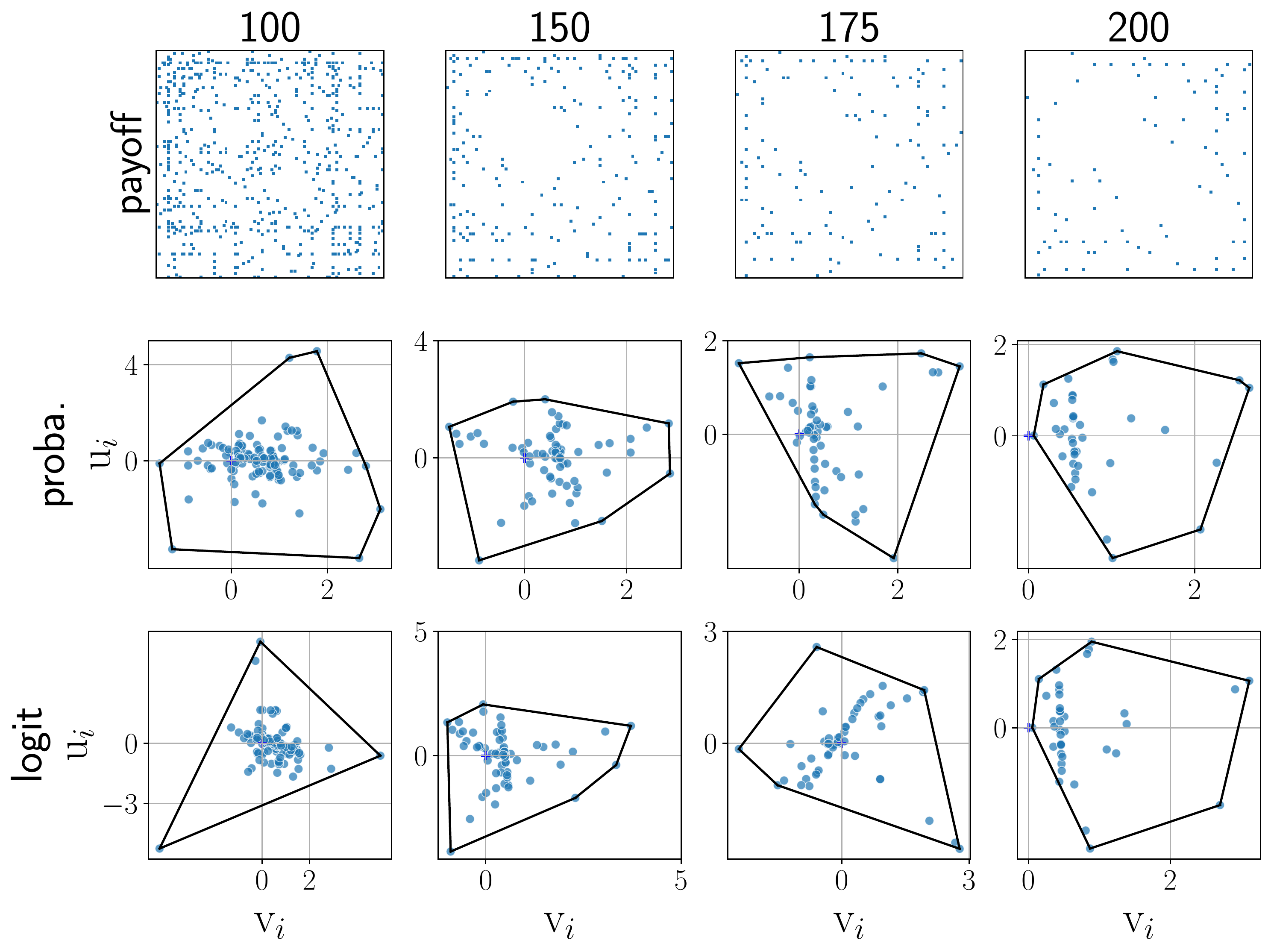}
        \caption{\textbf{Real StarCraft II Data.} }
        \label{fig:real_StarCraft_data}
      \end{subfigure}
      \caption[placeholder]{
      Proposed \ours as a function of the number of matchups for chess and StarCraft II.
      }
      \label{fig:chess_starcraft}
  \end{figure*}

\looseness-1
\subsection{Human Data: Lichess and StarCraft (\Cref{fig:chess_starcraft})}
For the game of chess, we used the
Lichess elite database\footnote{\url{https://lichess.org/team/lichess-elite-database}}.
Lichess is an open-source platform allowing one to play chess online against other players, and the Lichess elite database consists of "all (standard) games from Lichess filtered to only keep games by players rated 2400+ against players rated 2200+".
For our experiments, we used the data from August 2019 to May 2020, which contains more than $4.7$ million games between more than
$\num{40000}$ players.
For StarCraft, we used the
aligulac data \footnote{\url{http://aligulac.com/}}
from tournament with more than $1.7$ millions of games, between $\num{20000}$ players.
In StarCraft, each confrontation consists of multiple "games". We aggregated the games independently of the "race" used by the players. In both cases, the data we are using is public, anonymized, and only concerns non-sensitive information (matchup results).
Elite players usually play online against others more often than amateur ones.
Hence the number of matchups is larger for elite pairs of players, which yields a better estimation of the payoff matrix.
In other words, it makes sense to restrict ourselves to "elite" players.
\paragraph{Influence of the Number of Matchups}
Real-world data has a lot of missing entries; hence the matchups payoff matrices (see \Cref{fig:real_chess_data}) are incomplete.
In \Cref{fig:chess_starcraft} a blue square indicates a confrontation between the players of the corresponding row and column.
To decrease the noise in the estimation of the outcome probability, we only kept the probabilities coming from a large number of matchups between pairs of players.
For chess (\Cref{fig:real_chess_data}) and StarCraft (\Cref{fig:real_StarCraft_data}), we plotted the obtained representations as a function of this number of matchups.
%
\paragraph{Comments on \Cref{fig:chess_starcraft}}
\looseness-1
It displays the proposed \ours computed with \Cref{alg:altmin} in the probability space (middle) and logit space (bottom), for Lichess (\Cref{fig:real_chess_data}) and StarCraft (\Cref{fig:real_StarCraft_data}).
One can see that once the probability matrix is accurate enough (\Cref{fig:real_chess_data}, right, more than $80$ confrontations) the proposed method recovers a fully transitive game ($\playerv{i} > 0$).
The proposed representation also yields a strong transitive component for $60$ and $70$ confrontations.
For StarCraft, even by taking only the entries on the probability matrix with more than $200$ confrontations, one does not recover a representation as transitive as for chess. One interesting conclusion from this experiment is that our score naturally learns from human data that chess and StarCraft are transitive games.
To conclude, as expected from expert knowledge, chess seems to be more transitive than StarCraft (on the studied datasets).
%
\section{CONCLUSION}
In this work, we first recalled some limitations of the Elo score for transitive and cyclic games.
Then we studied in detail the disc game: we showed that depending on ($\vect{u}$, $\vect{v}$), the game is either cyclic or transitive.
Based on this example, we proposed a \ours, which can recover the usual Elo score if the game is an Elo game
and extend the Elo score to transitive non-Elo games and cyclic games.
The theoretical results were extensively validated on real data.

\looseness-1
\paragraph{Limitation}
The main weakness of this work is that it does not take into account
a potential structure in the game.
It is a strength and a weakness. On the one hand, it is a very general method that only requires matchup results between players.
On the other hand, a lot of information could be leveraged from the specificity of the game player, e.g., its (a priori) transitive aspect (as in \citealt{Balduzzi2019}) or its sequential aspect (as in \citealt{herbrich2006trueskill} or \citealt{Sismanis2010}).
A second issue is that the proposed approach is less straightforward to interpret than the Elo. While the Elo assigns only one scalar per player, which can easily yield a ranking interpretation, the proposed scores can be harder to interpret if the game is not transitive.

\paragraph{Societal Impact}
Our work is primarily methodological: we do not see potential negative societal impacts.

\paragraph{Acknowledgments}
QB would like to thank Samsung Electronics Co., Ldt. for funding this research.
GG is supported by an IVADO grant.
\newpage

\clearpage
\bibliography{references}
\bibliographystyle{plainnat}
\newpage
\onecolumn
\appendix

The appendix is organized as follows: \Cref{app:additional_computational_details}
 provides computational details on \Cref{alg:altmin,alg:otho_decomp}.
 \Cref{app:additional_expes} provides additional experiments on data from \citet{Czarnecki2020} and StarCraft data.
 Finally, \Cref{app:missing_proofs} contains the proofs of \Cref{prop:positive_trans,prop:disc_game}, \Cref{thm:at_most_one_transitive,thm:antisym_pca}.
\section{ADDITIONAL COMPUTATIONAL DETAILS}
\label{app:additional_computational_details}

%
\subsection{Computation Details}
\label{app:computational_details}
%
\paragraph{Gradient Computation}
In order to compute the proposed representation one needs to solve the following optimization problem ($k=1$ in \Cref{eq:opt_pb_decomposition_gen})
\begin{align}\label{eq_app:opt_pb_decomposition_gen_gen}
    \argmin_{
        \vect{u}, \vect{v},
        \text{ s.t. } \vect{u}^\top \vect{v} = 0}
    f(\vect{u}, \vect{v})
    \triangleq
    \sum_{i=1}^n
    \sum_{j=1}^n
     \mathcal{L}(\payoffprob{i}{j}, \player{i} \playerv{j} - \playerv{i} \player{j})
    \enspace.
\end{align}
Hence one needs to compute the gradient of $f$ with respect to $\vect{u}$ and $\vect{v}$.
Bellow is provided the formula for these gradients
\begin{align}
    \nabla_{\vect{u}_k} f(\vect{u}, \vect{v})
    &=
    \sum_{i, j}
    \playerv{j}
    \mathcal{L}'(\payoffprob{i}{j}, \player{i} \playerv{j} - \playerv{i} \player{j})
    \mathbb{1}_{i=k}
    - \playerv{i}
    \mathcal{L}'(\payoffprob{i}{j}, \player{i} \playerv{j} - \playerv{i} \player{j})
    \mathbb{1}_{j=k}
    \enspace,
    \\
    &=
    \sum_{j}
    \playerv{j} \mathcal{L}'(\mat{P}_{k, j}, \vect{u}_k \playerv{j} - \vect{v}_k \player{j})
    -
    \sum_{i}
    \playerv{i} \mathcal{L}'(\mat{P}_{i, k}, \player{i} \vect{v}_k - \playerv{i} \vect{u}_k)
    \enspace,
    \\
    \nabla_{\vect{u}} f(\vect{u}, \vect{v})
    &=
    \mathcal{L}'(\mat{P}, \vect{u} \vect{v}^\top - \vect{v} \vect{u}^\top) \vect{v}
    - \mathcal{L}'(\mat{P},  \vect{u} \vect{v}^\top - \vect{v} \vect{u}^\top)^\top \vect{v}
    \enspace.
\end{align}
With similar derivations, one can obtain
\begin{align}
    \nabla_{\vect{v}} f(\vect{u}, \vect{v})
    &=
    \mathcal{L}'(
        \mat{P}, \vect{u} \vect{v}^\top - \vect{v} \vect{u}^\top)^\top \vect{u}
        - \mathcal{L}'(\mat{P}, \vect{u} \vect{v}^\top - \vect{v} \vect{u}^\top) \vect{u}
    \enspace.
\end{align}
For instance, if $\mathcal{L}(\cdot, x) = \norm{\cdot - \sigma(x)}^2 $, then $\mathcal{L}'(\cdot, x) = (1 - \sigma(x)) \sigma(x) (\sigma(x) - \cdot) $.
Once the gradients are computed, one can solve the problem in \ref{eq_app:opt_pb_decomposition_gen_gen} using alternate minimization in $\vect{u}$ and $\vect{v}$:
each optimization subproblem can be solved using the \texttt{scipy} implementation \citep{Virtanen2020} of the l-BFGS algorithm \citep{Liu1989}.
\paragraph{Missing Entries}
Payoff matrices coming from real-world games played by humans usually contain missing entries: one does not have access to the matchups between all the players.
For instance, on the \texttt{Lichess} website, there usually are few confrontations between low-ranked and high-ranked players.
In other words, one only partially has access to $\payoffprob{i}{j}$, for $(i, j)$ in a given set of pairs of players $\mathcal{D}^{\mathrm{obs}} \subset  [n] \times [n]$.
Instead of the problem defined in \Cref{eq_app:opt_pb_decomposition_gen_gen}, the proposed \ours with a partial set of observation $\mathcal{D}^{\mathrm{obs}}$ reads:
\begin{align}\label{eq_app:opt_pb_decomposition_missing_data}
    \argmin_{
        \vect{u}, \vect{v},
        \text{ s.t. } \vect{u}^\top \vect{v} = 0}
    f(\vect{u}, \vect{v})
    \triangleq
    \sum_{(i, j) \in \mathcal{D}^{\mathrm{obs}}}
     \mathcal{L}(\payoffprob{i}{j}, \player{i} \playerv{j} - \playerv{i} \player{j})
    \enspace.
\end{align}
%
\subsection{Detailed on the Compared Methods}
\label{app:details_comptetitors}
%
\Cref{tab:prediction_synthetic} compares the performance on unseen data of the following models for payoff matrices from \citet{Czarnecki2020}
\begin{itemize}
    \item The usual Elo score \citep{elo1978rating}
    \begin{align}
        \label{eq_app:elo_stationary}
        \vect{u}^{\mathrm {Elo}}
        \in
        \argmin_{\vect{u} } \sum_{i, j}
        \mathcal{L}
        (\payoffprob{i}{j}, \sigma(\player{i} - \player{j})),
        \, \text{ with }
        \mathcal{L} : (x, \hat x) \mapsto \mathrm{bce}(x, \hat x)
        \enspace.
    \end{align}
    \item Elo ++ \citep{Sismanis2010}, without the sequential aspect
    \begin{align}
        \label{eq_app:elopp_stationary}
        \vect{u}^{\mathrm {Elo ++}}
        \in
        \argmin_{\vect{u} } \sum_{i, j}
        \mathcal{L}
        (\payoffprob{i}{j}, \sigma(\player{i} - \player{j})),
        \, \text{ with }
        \mathcal{L} :(x, \hat x)
        \mapsto
        \frac{1}{2}
        \normin{x - \hat x}^2
        \enspace.
    \end{align}
    \item This work (logit space)
    \begin{align}\label{eq_app:opt_pb_decomposition_gen}
        (\vect{u}^{\mathrm{Disc}}, \vect{v}^{\mathrm{Disc}} )
        \in
        \argmin_{
            \vect{u}, \vect{v},
            \text{ s.t. } \vect{u}^\top \vect{v} = 0}
        \sum_{i, j}
         \mathcal{L}(\payoffprob{i}{j}, \player{i} \playerv{j} - \playerv{i} \player{j}) ,
         \, \text{ with }
         \mathcal{L} : (x, \hat x) \mapsto \mathrm{bce}(x, \sigma(\hat x))
        \enspace.
    \end{align}
    \item This work (probability space), which can be seen as a representation proposed in \citet{Balduzzi2019}
    \begin{align}\label{eq_app:opt_pb_decomposition_gen_proba}
        (\vect{u}^{\mathrm{Disc}}, \vect{v}^{\mathrm{Disc}} )
        \in
        \argmin_{
            \vect{u}, \vect{v},
            \text{ s.t. } \vect{u}^\top \vect{v} = 0}
        \sum_{i, j}
         \mathcal{L}(\payoffprob{i}{j}, \player{i} \playerv{j} - \playerv{i} \player{j}) ,
         \, \text{ with }
         \mathcal{L} :(x, \hat x)
         \mapsto
         \frac{1}{2}
         \normin{x - \frac{1}{2} - \hat x}^2
        \enspace.
    \end{align}
    \item m-Elo \citep{Balduzzi2018}, with
    $\mat{\bar P} \triangleq (\frac{1}{n} \sum_j \payoffprob{i}{j})_{1 \leq i \leq n}$
    \begin{align}\label{eq_app:opt_melo}
        (\vect{u}^{\mathrm{m-Elo}}, \vect{v}^{\mathrm{m-Elo}} )
        \in
        \argmin_{
            \vect{u}, \vect{v},
            \text{ s.t. } \vect{u}^\top \vect{v} = 0}
        \sum_{i, j}
         \mathcal{L}(\payoffprob{i}{j} - (\mat{\bar P}_{i} - \mat{\bar P}_{j}), \player{i} \playerv{j} - \playerv{i} \player{j})
        \enspace,
    \end{align}
     with
    $\mathcal{L} :(x, \hat x)
    \mapsto
    \frac{1}{2}
    \normin{x - \frac{1}{2} - \hat x}^2$.
\end{itemize}
This work and the m-Elo are computed using \Cref{alg:otho_decomp}.
Each optimization problem in the alternate minimization (\Cref{alg:altmin}) is solved using the \texttt{scipy} \citep{Virtanen2020} implementation of l-BFGS \citep{Liu1989}.
%
\section{ADDITIONAL EXPERIMENTS}
\label{app:additional_expes}
\subsection{Data from \citet{Czarnecki2020}}
\Cref{fig:elo_to_disc_app,fig:transitive_cyclic_app} are similar to \Cref{fig:elo_to_disc}, but the Nash clustering visualization from \citet{Czarnecki2020} has been added.
For each payoff matrix, \citet{Czarnecki2020} successively compute Nash equilibria of empirical games to cluster players into level sets of ``strength''.
For each dot $i$, $\playerv{i}$ corresponds to the clustered index and $\player{i}$ to the fraction of the population beaten by the cluster (this is different from our representation where each dot corresponds to a player).
The number of clusters can be interpreted as a measure of transitivity: the larger the number of clusters, the more transitive the game.
\begin{figure}[tb]
    \centering
    \includegraphics[width=\linewidth]{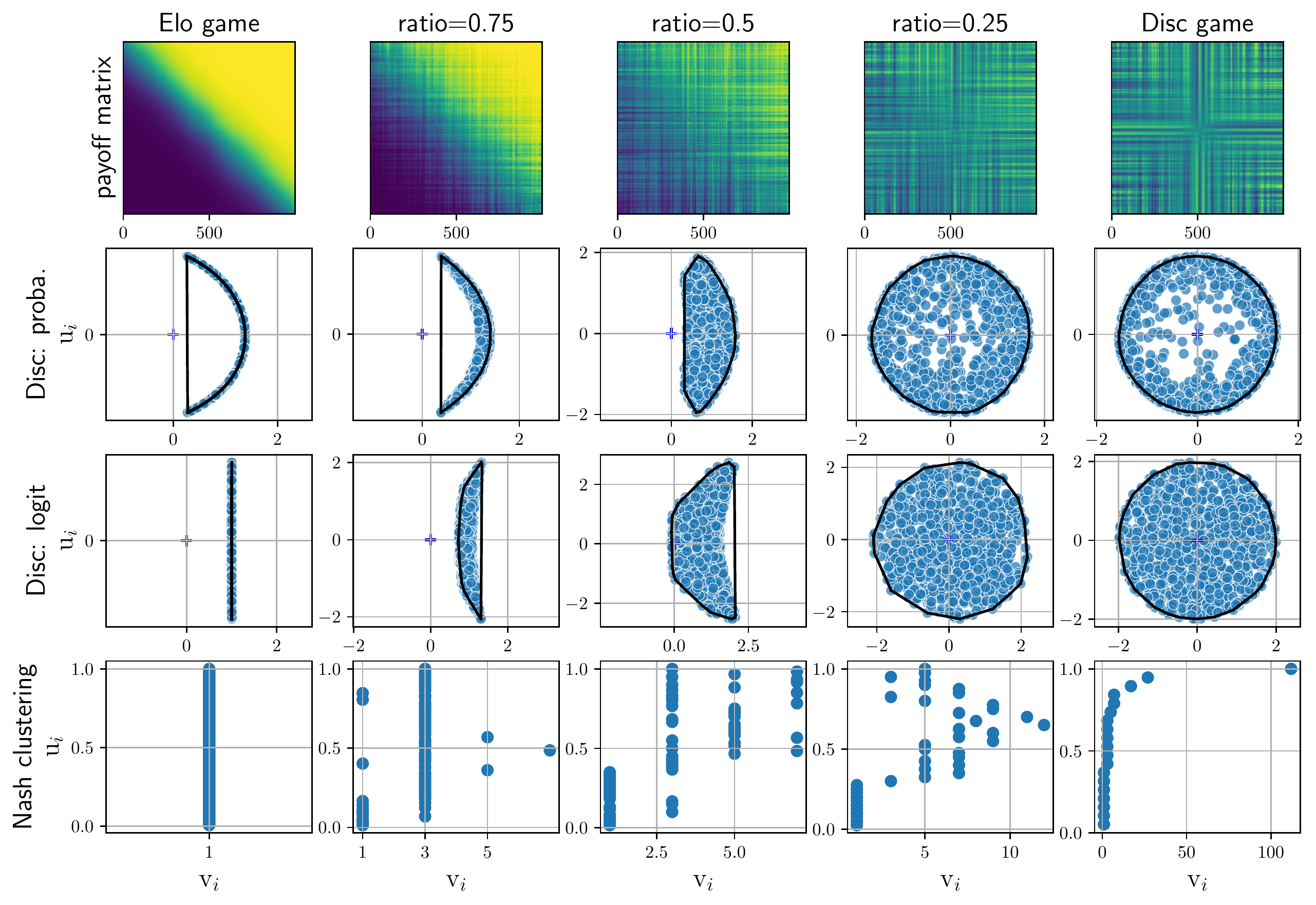}
    \caption{
        \textbf{From Elo to Disc Game.}
    Payoff matrices (top) and visualization using Algorithm~\ref{alg:altmin} in the probability space (second row), logit space (third row), or \citet{Czarnecki2020} (bottom), for multiple games with payoff
    $\logit(\mat{P})
    =
    \mathrm{ratio} \cdot \logit(\mat{P}^{\mathrm{Elo}})
    + (1 - \mathrm{ratio}) \cdot \logit(\mat{P}^{\mathrm{Disc}})
    $.}
    \label{fig:elo_to_disc_app}
\end{figure}
\begin{figure}[tb]
    \centering
    \includegraphics[width=\linewidth]{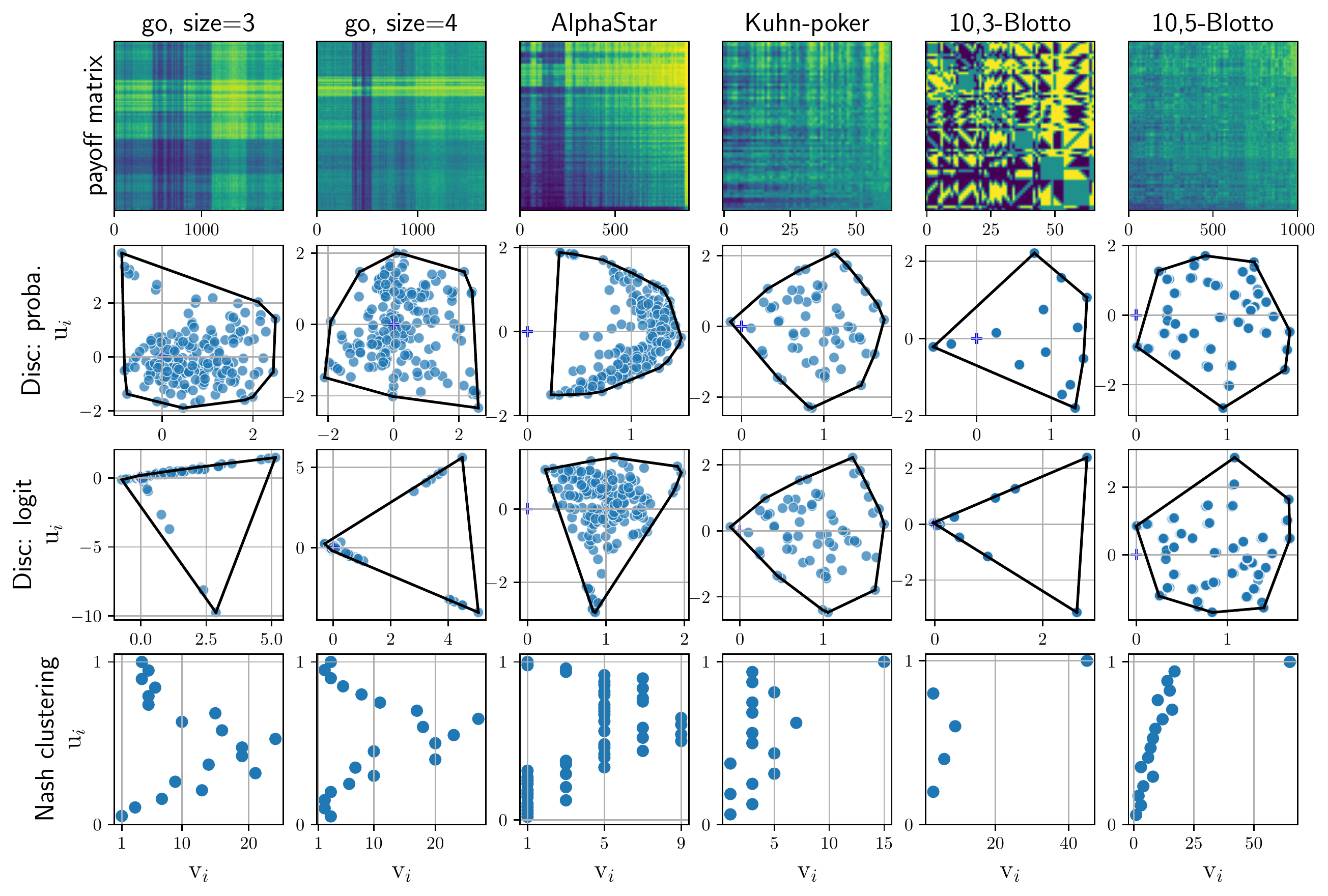}
    \caption{\textbf{Transitive and Cyclic Games.}
    Payoff matrices (top) and visualization using Algorithm~\ref{alg:altmin} in the probability space (second row), logit space (third row), or \citet{Czarnecki2020} (bottom), for multiple games with payoff
    from \citet{Czarnecki2020}.}
    \label{fig:transitive_cyclic_app}
\end{figure}
%
%
\subsection{Prediction Performance for StarCraft data}
%
\Cref{fig:pred_starcraft} shows the prediction performance on unseen entries for StarCraft II confrontations.
As observed in \citet{Sismanis2010}, adding an extra parameter makes the models significantly overfit.
That is why a regularization parameter was added to prevent overfitting:
\begin{align}\label{eq_app:opt_pb_decomposition_gen_reg}
    (\vect{u}^{\mathrm{Disc}}, \vect{v}^{\mathrm{Disc}} )
    \in
    \argmin_{
        \vect{u}, \vect{v},
        \text{ s.t. } \vect{u}^\top \vect{v} = 0}
    \sum_{i, j}
     \mathcal{L}(\payoffprob{i}{j}, \player{i} \playerv{j} - \playerv{i} \player{j})
    + \frac{\lambda}{2} \sum_i (\playerv{i} - 1)^2
     \enspace,
\end{align}
where the regularization parameter $\lambda$ is chosen using cross-validation.

One can see on \Cref{fig:pred_starcraft} that the proposed regularized disc rating can yield better predictions than the usual Elo score.
But improvements are not as impressive as for data from \citealt{Czarnecki2020} (\Cref{fig:pred_spinning_top}).
This might be due to a more difficult estimation of the payoff matrix: one only has access to an incomplete and potentially noisy estimation of the payoff matrix.
\begin{figure}[tb]
    \centering
    \includegraphics[width=1\linewidth]{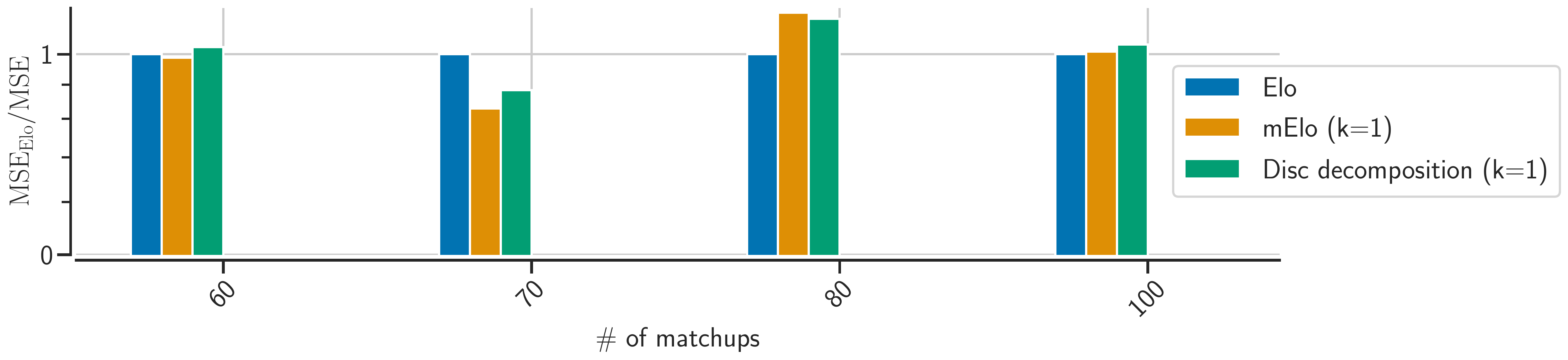}
    \caption{Prediction performances as a function of the number of matchups for the StarCraft data.}
    \label{fig:pred_starcraft}
\end{figure}
%

\section{PROOFS OF THEOREMS AND PROPOSITIONS}
\label{app:missing_proofs}

\end{document}